\newtheorem{algo4.9}{Algorithm 4.9}
\newcommand{\mtr}{\textsc{mtr}}
\newcommand{\Rmnum}[1]{\expandafter\@slowromancap\romannumeral #1@}
\begin{document}

% Page heads
\markboth{}{}

% Title portion
\title{ Computational issues in time-inconsistent planning}
\author{PINGZHONG TANG
\affil{IIIS, Tsinghua University}
YIFENG TENG
\affil{IIIS, Tsinghua University}
ZIHE WANG
\affil{IIIS, Tsinghua University}
SHENKE XIAO
\affil{IIIS, Tsinghua University}
YICHONG XU
\affil{IIIS, Tsinghua University}
}
% NOTE! Affiliations placed here should be for the institution where the
%       BULK of the research was done. If the author has gone to a new
%       institution, before publication, the (above) affiliation should NOT be changed.
%       The authors 'current' address may be given in the "Author's addresses:" block (below).
%       So for example, Mr. Abdelzaher, the bulk of the research was done at UIUC, and he is
%       currently affiliated with NASA.

\begin{abstract}
Time-inconsistency refers to a paradox in decision making where agents exhibit inconsistent behaviors over time. Examples are {\em procrastination} where agents tends to postpone easy tasks, and {\em abandonments} where agents start a plan and quit in the middle. These behaviors are undesirable in that agents make clearly suboptimal decisions over optimal ones. To capture such behaviors and more importantly, to quantify inefficiency caused by such behaviors, \cite{kleinberg2014time} propose a graph model which is essentially the same as the standard planning model except for the cost structure. Using this model, they initiate the study of several interesting computation problems: 1) cost ratio: the worst ratio between the actual cost of the agent and the optimal cost, over all the graph instances; 2) motivating subgraph: how to motivate the agent to reach the goal by deleting nodes and edges; 3) Intermediate rewards: how to motivate agents to reach the goal by placing intermediate rewards. Kleinberg and Oren give partial answers to these questions, but the main problems are still open. In fact, they raise these problems as open problems in their original paper.

In this paper, we give answers to all three open problems in \cite{kleinberg2014time}. First, we show a tight upper bound of cost ratio for graphs without Akerlof's structure, thus confirm the conjecture by Kleinberg and Oren that Akerlof's structure is indeed the worst case for cost ratio. Second, we prove that finding a motivating subgraph is NP-hard, showing that it is generally inefficient to motivate agents by deleting nodes and edges in the graph. Last but not least, we show that computing a strategy to place minimum amount of total reward is also NP-hard.

\end{abstract}

\category{J.4}{Social and Behavioral Sciences}{Economics}

%\terms{Design, Algorithms, Performance}

\keywords{behavioral economics, time-inconsistency, computational complexity}

%\acmformat{Gang Zhou, Yafeng Wu, Ting Yan, Tian He, Chengdu Huang, John A. Stankovic,
%and Tarek F. Abdelzaher, 2010. A multifrequency MAC specially
%designed for  wireless sensor network applications.}
% At a minimum you need to supply the author names, year and a title.
% IMPORTANT:
% Full first names whenever they are known, surname last, followed by a period.
% In the case of two authors, 'and' is placed between them.
% In the case of three or more authors, the serial comma is used, that is, all author names
% except the last one but including the penultimate author's name are followed by a comma,
% and then 'and' is placed before the final author's name.
% If only first and middle initials are known, then each initial
% is followed by a period and they are separated by a space.
% The remaining information (journal title, volume, article number, date, etc.) is 'auto-generated'.

%\begin{bottomstuff}
%This work is supported by the National Science Foundation, under
%grant CNS-0435060, grant CCR-0325197 and grant EN-CS-0329609.

%Authors' addresses: Shenke Xiao, Yichong Xu {and} Yifeng Teng Computer Science Department,
%College of William and Mary; Y. Wu  {and} J. A. Stankovic,
%Computer Science Department, University of Virginia; T. Yan,
%Eaton Innovation Center; T. He, Computer Science Department,
%University of Minnesota; C. Huang, Google; T. F. Abdelzaher,
%(Current address) NASA Ames Research Center, Moffett Field, California 94035.
%\end{bottomstuff}

\maketitle

\section{Introduction}

In behavioral economics, an important theme has been to understand individual behaviors that are inconsistent over time. There are at least two types of inconsistencies investigated in the literature. The first type is {\em Procrastination}~\cite{akerlof1991procrastination,o1999doing,kleinberg2014time}: agents tend to postpone costly actions even though such delay may incur further cost.  The second type is {\em abandonment}~\cite{abandonment2008}: agents plan for a multi-phase task (usually with rewards in the end), spend efforts in the initial phases and decide to quit in the middle.

Both types of behaviors have been widely observed in reality. Akerlof~\shortcite{akerlof1991procrastination} describes a story of procrastination (restated in~\cite{kleinberg2014time}) where an agent must ship a package within the next a few days, incurs an immediate cost for shipping the package or some additional daily cost for not shipping the package. Clearly, the {\em optimal} strategy for the agent is to ship the package right away, avoiding any additional daily cost. However, as the story goes, the agent chooses to procrastinate and to send the package in one of the last few days. Similar examples abound, ranging from golf club members that never play golf (abandonment) to investors that rent an apartment for years before making a purchase (procrastination).

The interpretation to all these phenomena lies in that agents value current cost more than the cost in the future. Researchers in the literature have developed various models to capture this observation and to interpret the inconsistencies~\cite{strotz1955myopia,akerlof1991procrastination,laibson1997golden,frederick2002time}. We refer the readers to \cite{kleinberg2014time} and the references therein for a detailed description of this line of work. In what follows, we describe the model in~\cite{kleinberg2014time}, coined the {\em time-inconsistent planning} model, based on which all the analyses of this paper are built.

\subsection{The time-inconsistent planning model}
Roughly put, the time-inconsistent planning model is no different from the standard planning model~\cite{pollak1968consistent,Russell2003}, except for a slight twist on the cost structure. The standard planning model is a directed graph (aka. task graph) where each node in the graph represents a state, each directed edge denotes an action that transits one state to another and each action incurs a certain cost, marked as the weight on the edge. The planner's goal is to find a shortest (min-cost) path between the initial state and goal state. The time-inconsistent planning model modifies the model above by redefining the cost of a path: instead of summing the costs of all edges on that path: $\sum_{i=1} c(e_i)$ where $c(e_i)$ is the cost of the $i$-th edge on the path, a time-inconsistent agent applies a multiplicative factor $0<\beta<1$ to the costs of all the edges except for the first edge in the path $c(e_1)+\beta\sum_{i=2} c(e_i)$. The interpretation is that the time-inconsistent agent evaluates actions at the current state at its true cost, while discounts the costs (rewards) of all future actions by $\beta$. The time-inconsistent planning model is defined as a time-inconsistent agent who looks for a discounted shortest path at any state in the task graph. The cost model above can also be considered as a special case of the {\em quasi-hyperbolic discounting} model~\cite{laibson1997golden}.

Simple as it appears to be, the model is powerful enough to capture a range of interesting time-inconsistent behaviors. In particular, Kleinberg and Oren~\cite{kleinberg2014time} show that:
\begin{itemize}
\item It can easily capture Akerlof's example of procrastination by Figure~\ref{fig:one}.
\item Time-inconsistency agents sometimes follows a sub-optimal path (see also Figure~\ref{fig:one}).
\item The model can be easily extended to a model to include reward, simply by placing some reward at the goal node. This extension can be further used to capture the phenomenon of abandonment: an agent may find it desirable to follow the optimal discounted path at initial nodes but then find it not beneficial when evaluating at some middle node.
\item The model can be used to model the interesting fact of {\em choice reduction}: agents can be better motivated to reach the goal by deleting certain middle nodes and edges from the task graph.
\end{itemize}

\subsection{Results, open problems by ~\cite{kleinberg2014time} and our contributions}

Perhaps more importantly (from the perspective of EC), the time-inconsistent model facilitates analyses of the following important computation problems:
\begin{enumerate}
\item Cost ratio. Cost ratio is defined as the ratio between the cost of the path found by the agent and that of the min-cost path. \cite{kleinberg2014time} gives a characterization of cost ratio in terms of {\em graph minors\footnote{To be formally defined immediately}}: roughly, any graph with a sufficiently high (at least $\lambda^n$ for some $\lambda >1$ and $n$) cost ratio must contain Figure~\ref{fig:one} (denoted as $\mathcal{F}_k$) as a graph minor and $k$ is at least a constant fraction of $n$. In light of this characterization, an important open problem raised by Kleinberg and Oren is: when the graph does not contain a $\mathcal{F}_k$-minor, how bad can the cost ratio be? This question is particularly important since it concerns whether $\mathcal{F}_k$ is indeed the worst case instance for cost ratio.

We solve this problem by proving that, for any graph that does not contain a $\mathcal{F}_k$-minor, the cost ratio can be at most $\beta^{2-k}$, where $\beta$ is the discount factor. Therefore, we confirm that $\mathcal{F}_k$ is indeed the worst case instance for cost ratio and the bound proved by Kleinberg and Oren is tight.

\item Minimal Motivating Subgraphs. A motivating subgraph is a subgraph of the original task graph and a time-inconsistent agent can reach the goal in this subgraph. A motivating subgraph is minimal if none of its proper subgraphs is motivating. Clearly, motivating subgraph is closely related to the previously mentioned economic problem of choice reduction. Kleinberg and Oren prove a relatively complex property that says the minimal motivating subgraphs are necessarily {\em sparse}. Here, we ask a natural complexity question: what is the computational complexity of finding motivating subgraph? This is also the second open problem raised by Kleinberg and Oren. We prove that this problem is NP-hard. More generally, we show that finding any motivating subgraph (e.g., maximal) is NP-hard.

\item Cost of placing intermediate rewards. Instead of motivating agents to reach their goal by choice reduction (i.e., via motivating subgraph), an alternative way that has been seen in the literature is to place rewards on intermediate nodes. A natural question (the third open problem by Kleinberg and Oren) is: what is the minimum total reward needed to motivate an agent to reach its goal? We prove that this problem is also NP-hard.
\end{enumerate}

In short, we give answers to all three open questions in \cite{kleinberg2014time}.

\section{Formal description of the model}

As defined in~\cite{kleinberg2014time}, the task graph is an acyclic directed graph $G$ with a start node $s$ and a target (goal) node $t$, where each edge $(u,v)$ has a non-negative cost $c(u,v)$. For any pair of nodes $(u,v)$, we denote by $d(u,v)$ the minimum total cost from $u$ to $v$,
\[d(u,v)=\min_{P\in\mathcal{P}(u,v)}\sum_{e\in P}c(e),\]
where $\mathcal{P}(u,v)$ is the set of all possible paths from $u$ to $v$. For simplicity, let $d(v)=d(v,t)$ for any node $v$. Denote the discount parameter by $\beta\in[0,1]$. An agent starts at $s$ and travels towards $t$. In each step, the agent at node $u$ chooses an out-neighbor $v$ that minimizes $c(u,v)+\beta d(v)$ (if more than one node minimizes this value, the agent chooses one arbitrarily). Clearly, the agent cares less about the future for smaller $\beta$. Suppose $P$ is the $s$-$t$ path the agent chooses, the \emph{cost ratio} is defined as
$\sum_{e\in P}c(e)/d(s),$
i.e., the ratio of actual cost to the optimal cost.

To state the first open problem by Kleinberg and Oren, as well as our answer, we need the following definitions, also from ~\cite{kleinberg2014time}.

Given two undirected graphs $H$ and $K$, we say that \emph{$H$ contains a $K$-minor} if we can map each node $\kappa$ of $K$ to a connected subgraph $S_{\kappa}$ in $H$, with the properties that (i) $S_{\kappa}$ and $S_{\kappa'}$ are disjoint for every two nodes $\kappa$, $\kappa'$ of $K$, and (ii) if $(\kappa,\kappa')$ is an edge of $K$, then in $H$ there is some edge connecting a node in $S_{\kappa}$ with a node in $S_{\kappa'}$.

Moreover, let $\sigma(G)$ denote the {\em skeleton} of $G$, the undirected graph obtained by removing the directions on the edges of $G$. Let $\mathcal{F}_k$ denote the graph with nodes $v_1,v_2,\ldots,v_k,$ and $w$, and edges ($v_i,v_{i+1}$) for $i=1,\ldots,k-1$, and $(v_i,w)$ for $i=1,\ldots,k$. Figure~\ref{fig:one} depicts $\mathcal{F}_k$. $\mathcal{F}_k$ is a special structure in this setting. For one, it vividly illustrates the Akerlof story. Furthermore, \cite{kleinberg2014time} has proved the following theorem:
\begin{theorem}
~\cite{kleinberg2014time}
For every $\lambda>1$ there exist $n_0>0$ and $\varepsilon>0$ such that if $n\geq n_0$ and cost ratio $r>\lambda^n$, then $\sigma(G)$ contains an $\mathcal{F}_k$-minor for some $k\geq\varepsilon n$.
\label{cost}
\end{theorem}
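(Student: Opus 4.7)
The plan is to prove the contrapositive: if the largest $\mathcal{F}_k$-minor in $\sigma(G)$ has $k<\varepsilon n$, then the cost ratio is at most $\lambda^n$. Equivalently, assume a time-inconsistent agent traverses $s=u_0\to u_1\to\cdots\to u_\ell=t$ with cost ratio exceeding $\lambda^n$, and extract a large $\mathcal{F}_k$-minor from the graph.

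\textbf{Step 1: the local agent inequality and its telescoping form.} At each step $i$ the agent chose $u_{i+1}$ to minimize $c(u_i,\cdot)+\beta d(\cdot)$, so comparing against the first edge $v^*$ of a true shortest path out of $u_i$ gives $c(u_i,u_{i+1})+\beta d(u_{i+1})\leq d(u_i)$. Writing $d_i=d(u_i)$ and $c_i=c(u_i,u_{i+1})$, telescoping $c_i\leq d_i-\beta d_{i+1}$ bounds the agent's total cost by $d_0+(1-\beta)\sum_{i=1}^{\ell-1}d_i$. Since $\ell\leq n$, a cost ratio exceeding $\lambda^n$ forces the sequence $(d_i)$ to contain $\Omega(n)$ indices at which $d_i$ exceeds some geometrically growing threshold; call these \emph{spike indices}.

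\textbf{Step 2: extract candidate spine vertices.} At each spike index $i$ there is an optimal path $Q_i$ from $u_i$ to $t$ of weight $d_i$, which the agent rejected in favor of a route whose eventual cost will dwarf $d_i$. These paths $Q_i$, together with the agent's path, are the raw material for a $\mathcal{F}_k$-minor: the spine $v_1,v_2,\ldots$ comes from an increasing subsequence of spike indices on the agent's path (contracted into connected subgraphs if necessary to supply the spine edges), and each shortcut edge $(v_i,w)$ arises from contracting the tail of $Q_i$ into a single common ``sink'' subgraph surrounding $t$.

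\textbf{Step 3 (the main obstacle): disjointness and the contraction to $w$.} To obtain an honest minor in $\sigma(G)$, the subgraphs assigned to the $v_i$'s and to $w$ must be pairwise disjoint, yet the $Q_i$ may collide with each other and with the spine. My plan is to exploit the geometric spacing of the $d_i$ values: process spike indices in decreasing order of $d_i$, and whenever $Q_i$ hits a vertex already claimed by an earlier shortcut (necessarily of smaller $d$-value), truncate $Q_i$ there and merge its suffix into $w$'s subgraph, taking advantage of the fact that all such $Q_i$'s terminate at $t$ so $w$'s subgraph remains connected. A pigeonhole/charging argument, bounding how many distinct spikes can be killed through a single shared vertex, should show that a constant fraction of the $\Omega(n)$ spikes survive as a valid spine with pairwise-disjoint shortcuts, giving $k\geq\varepsilon n$ for an $\varepsilon=\varepsilon(\lambda,\beta)>0$; $n_0$ is chosen large enough to swamp the additive losses from the merging. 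The delicate part is precisely this bookkeeping — without the geometric spacing of the $d_i$'s it would be impossible to prevent a cascading loss in $k$.
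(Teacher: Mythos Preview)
The paper does not prove this statement. It is quoted from \cite{kleinberg2014time} as background, with no proof supplied here. The paper's own contribution is the next theorem: if $\sigma(G)$ contains no $\mathcal{F}_k$-minor then the cost ratio is at most $\beta^{2-k}$, and this is tight. That sharper result immediately yields the statement you are attempting: if the cost ratio exceeds $\lambda^n$, then by the contrapositive $\sigma(G)$ must contain an $\mathcal{F}_k$-minor for every $k$ with $\beta^{2-k}<\lambda^n$, i.e.\ for $k$ up to roughly $n\log\lambda/\log(1/\beta)$, so one may take any $\varepsilon<\log\lambda/\log(1/\beta)$ and $n_0$ large enough to absorb the additive $2$.

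Since there is no proof in this paper to compare against, I can only assess your sketch on its own terms and against the paper's proof of its stronger theorem. Your Step~1 is correct and is the same telescoping the paper uses. Steps~2 and~3, however, are not yet a proof. The assertion that a cost ratio above $\lambda^n$ forces $\Omega(n)$ ``spike indices'' needs a definition and an actual counting argument; the only a priori control is $d_{i+1}\le d_i/\beta$, which does not by itself give the geometric spacing you invoke. More seriously, the disjointness step is the whole difficulty, and ``truncate $Q_i$ and merge suffixes into $w$'' does not guarantee the spine blobs stay disjoint from the $w$-blob or from each other. The paper handles exactly this issue (in proving its $\beta^{2-k}$ bound) by tracking, for each shortcut node $u_i$, the \emph{first re-entry point} $w_i$ of the optimal path $P_i$ back onto the agent's path, and distinguishing three cases for where $w_i$ falls relative to later shortcut nodes; Lemma~\ref{xsk:graph lemma} then extracts the $\mathcal{F}_k$-minor once enough $w_i$'s are shown to land beyond a common node. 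That re-entry bookkeeping is precisely what your Step~3 defers to an unspecified charging argument, so as written your proposal is the right picture but with the key mechanism missing.
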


\begin{figure}
\centerline{\includegraphics[scale=0.4]{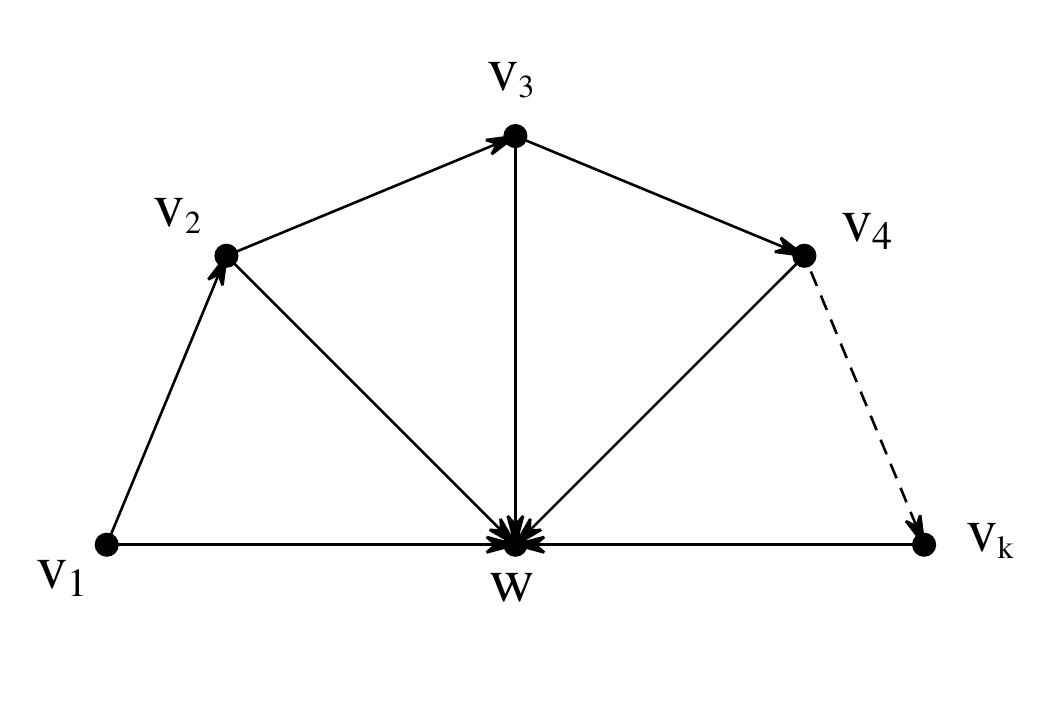}}
\caption{$\mathcal{F}_k$}
\label{fig:one}
\end{figure}

\section{Maximum Cost Ratio}

In light of the result above, Kleinberg and Oren propose the following open question: what is the maximum cost ratio (if exists) if $\sigma(G)$ does not contain an $\mathcal{F}_k$-minor. In other words, the problem asks, without $\mathcal{F}_k$,  how much waste can be resulted from time-inconsistency. The question is extremely important since it is closely related to whether $\mathcal{F}_k$ is the worst case instance for cost ratio.

Note that an edge is exactly $\mathcal{F}_1$. Assume $k>1$, we have the following theorem,
\begin{theorem}
For any $k>1$, if $\sigma(G)$ does not contain an $\mathcal{F}_k$-minor, the cost ratio is at most $\beta^{2-k}$. This bound is tight and can be achieved by $\mathcal{F}_{k-1}$.
\label{cost}
\end{theorem}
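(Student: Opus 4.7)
The plan is to handle the two claims separately: the matching lower bound by a direct cost construction on $\mathcal{F}_{k-1}$, and the upper bound by induction on $k$ combined with a case analysis on the agent's first move.

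For the tight lower bound, take $G = \mathcal{F}_{k-1}$ with edges oriented $v_i \to v_{i+1}$ and $v_i \to w$, set $s = v_1$, $t = w$, and assign $c(v_i, v_{i+1}) = 0$ and $c(v_i, w) = x^i$ for a parameter $x$ just below $1/\beta$. A descending induction shows $d(v_i) = x^i$; at each intermediate $v_i$ the agent compares the perceived costs $\beta x^{i+1}$ (via $v_{i+1}$) and $x^i$ (direct to $w$), and since $\beta x < 1$ the agent procrastinates at every step. It then traverses the whole chain and pays $x^{k-1}$ to $w$, yielding ratio $x^{k-2}\to\beta^{2-k}$ as $x \nearrow 1/\beta$. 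The graph $\mathcal{F}_{k-1}$ has only $k$ vertices, hence trivially no $\mathcal{F}_k$-minor.

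For the upper bound, the base case $k=2$ is immediate: the absence of an $\mathcal{F}_2 = K_3$-minor forces $\sigma(G)$ to be a forest, so the $s$-$t$ path is unique and $r = 1 = \beta^0$. For $k \ge 3$, let $u_1$ be the agent's first choice and $v^*$ an optimal first neighbor, so $d(s) = c(s, v^*) + d(v^*)$. In the non-deviation case $u_1 = v^*$, the subgraph reachable from $u_1$ inherits the hypothesis, so by induction on graph size $C_1 \le \beta^{2-k} d(u_1)$; combined with $c(s, u_1) = d(s) - d(u_1)$, $d(u_1) \le d(s)$, and $\beta^{2-k} \ge 1$, this yields $C_0 \le \beta^{2-k} d(s)$. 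In the deviation case $u_1 \ne v^*$, agent-optimality gives $c(s, u_1) \le d(s) - \beta d(u_1)$ and the one-step inequality gives $d(u_1) \le d(s)/\beta$. The key structural lemma asserts that the deviation forces the relevant subgraph to drop one level, i.e., it cannot contain an $\mathcal{F}_{k-1}$-minor; granting this, induction at parameter $k-1$ gives $C_1 \le \beta^{3-k} d(u_1)$, and a direct computation closes the recursion with $C_0 \le \beta^{2-k} d(s)$.

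The main obstacle is proving the structural lemma in the deviation case, which I would establish contrapositively. Suppose the sub-problem admits an $\mathcal{F}_{k-1}$-minor with path subgraphs $S_1', \ldots, S_{k-1}'$ and center $S_w'$; the goal is to construct an $\mathcal{F}_k$-minor in $\sigma(G)$ by adjoining $s$ as a new path endpoint (using edge $(s, u_1)$ to the neighboring path subgraph) and extending the center subgraph to incorporate $v^*$ (via edge $(s, v^*)$ together with a suitable connecting path). The delicate step is to argue that the sub-problem's minor can always be represented in a form compatible with this extension: one must be able to choose the minor so that $u_1$ lies in an endpoint path-subgraph and so that $v^*$ is either already in $S_w'$ or can be connected to $S_w'$ by a path disjoint from the other path-subgraphs. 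The several configurations to consider --- $v^*$ in the center, $v^*$ inside some path subgraph, or $v^*$ outside the reachable closure of $u_1$ --- call for a case analysis that reroutes the shortcut and possibly re-partitions vertices among the subgraphs; this verification, which must in particular handle shortcuts that land back on the agent's path, is the technical heart of the argument.
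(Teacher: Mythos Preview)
Your lower-bound construction on $\mathcal{F}_{k-1}$ and the base case $k=2$ are fine, and the arithmetic in both branches of your inductive step checks out. The gap is the structural lemma in the deviation branch: it is not merely delicate, it is false.

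Here is a counterexample for $k=3$. Take nodes $s,u_1,v_1,v_2,t$ with directed edges $s\to u_1$, $s\to v_1$, $u_1\to v_1$, $v_1\to v_2$, $v_1\to t$, $v_2\to t$. Set $c(s,u_1)=0$ and pick $c(s,v_1)$ strictly between $\beta\,c(u_1,v_1)$ and $c(u_1,v_1)$; then the optimal first step from $s$ is $v^*=v_1$ while the agent strictly prefers $u_1$. The subgraph $G'$ reachable from $u_1$ contains the triangle $v_1,v_2,t$ in its skeleton, hence an $\mathcal{F}_2$-minor, so your induction at parameter $k-1=2$ is unavailable. Yet $\sigma(G)$ is two triangles sharing the cut vertex $v_1$ (a bowtie), and since $\mathcal{F}_3=K_4\setminus e$ is $2$-connected it cannot be a minor of a graph whose blocks are all triangles. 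No re-partitioning of the sub-problem's minor can manufacture an $\mathcal{F}_3$-minor in $\sigma(G)$: whichever role you assign to $v^*=v_1$, the edge you would need from $\{s\}$ to the center branch simply does not exist. The underlying reason is that a single deviation at $s$ carries no information about where an $\mathcal{F}_{k-1}$-minor sits inside $G'$; that minor can live entirely in a block of $\sigma(G)$ that $s$ reaches only through the same cut vertex as $u_1$, so adjoining $s$ cannot raise the fan parameter.

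The paper's proof avoids this obstacle by working globally rather than peeling off one deviation at a time. It lists \emph{all} shortcut nodes $u_1,\dots,u_n$ along the agent's actual path $P$, expands $d(s)$ iteratively into a positive combination $\sum_e \alpha(e)c(e)$ with recursively defined coefficients, and proves $\alpha(e)\ge\beta^{|S_i|}$, where $S_i$ is the set of earlier shortcuts whose min-cost detours rejoin $P$ only beyond the current position. The minor argument is then invoked once: if $|S_i|\ge k-1$, one has $k-1$ shortcuts whose detours all straddle a common node of $P$, and those detours together with the intervening segment of $P$ yield exactly the fan $\mathcal{F}_k$. Hence $|S_i|\le k-2$ and the bound follows. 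The key difference from your plan is that the $\mathcal{F}_k$-minor is extracted from $k-1$ \emph{simultaneously overlapping} shortcuts on the agent's own trajectory, not by gluing one deviation onto an arbitrary downstream minor.
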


% Head 2

\subsection{Proof Sketch}
%Firstly we propose an informal proof to sketch the proof outline.

To analyze the cost ratio of any graph, our first observation is to focus on the set of \emph{shortcut nodes}. Roughly, a shortcut node is one where the agent's min-cost choice is different from his actual choice. Clearly, if there were no such nodes, i.e., the agent's actual path and min-cost path coincide, we would end up in the ideal case where the cost ratio is 1.
For each shortcut node, we obtain an inequality that states the discounted cost of the actual path is less than or equal to that of the min-cost path. The intuition here is that each appearance of such a shortcut node contribute a factor of $\beta$ to the cost ratio, and $k$ appearances (to be rigourously defined in the main proof) would lead to the worst case of $\beta^k$ and this only happens if $\mathcal{F}_k$-minor exists.

To formally prove this statement, we need to carefully expand the cost formula $d(s)$ as a linear combination of costs on edges $c(e)$'s. This is complicated, again, by the existence of shortcut nodes, since the recursive formula that defines $d(u_i)$ (where $u_i$ is some shortcut node) introduces two new terms $d(u_i')$, the cost from the next node on the actual path and $d(w_i)$, the cost where the current min-cost path merges with the actual path (See Figure \ref{fig:rel}). Our strategy is to fix $d(w_i)$ and carefully expand $d(u_i')$.

A key step of our proof is that three different cases of $w_i$ are considered and different relaxations are given for each case (see the definition of $t_i$ below). This is also why Kleinberg-Oren paper fails to get the tight bound.

Continue the expansion of $d(u_i)$ using the rules above until the right hand side contains $c(e)$'s only, i.e., representing $d(u_i)$ as a linear combination of $c(e)$'s. We obtain the final bound by bounding the coefficients of the linear combination.

\subsection{Formal Proof}
Now we give a formal proof to Theorem \ref{cost}. The proofs of all the lemmas will be shown in appendix.
\begin{proof}
Let $P$ be the path that the agent actually travels through. The main idea of the proof is to obtain an inequality with the form $d(s)\geq \sum_{e\in P}\alpha(e)c(e)$ where $\alpha$'s are positive coefficients, so that we can use the minimal coefficient to bound $d(s)/\sum_{e\in P}c(e)$.

Firstly we introduce some notations. For any node $u$ on $P$, denote by $u'$ the node immediately after $u$ in $P$; for any pair of nodes $(u,v)$ in $P$, denote by $c(u,v)$ the total cost of edges between $u$ and $v$ ion $P$. A node $u$ in $P$ is defined to be a \emph{shortcut node} if the second node on the min-cost path from $u$ to $t$ is not $u'$ (note that the first node is $u$). In other words, a shortcut node is one where the agent's min-cost choice is different from his actual choice on $P$.

If $u$ on $P$ is not a shortcut node, we have $d(u)=c(u,u')+d(u')$, then for the ideal case when there is no shortcut node, we would have $d(s)=c(s,s')+d(s')=c(s,s')+c(s',s'')+d(s'')=\cdots=\sum_{e\in P} c(e)$, resulting in a cost ratio of $1$.

 Now, suppose there are $n$ shortcut nodes: $u_1,u_2,\ldots,u_n$ by the order of appearance on $P$. For $i=1,2,\ldots,n$, denote by $P_i$ the min-cost path from $u_i$ to $t$; denote by $w_i$ the second crossing point (note that the first node is $u_i$) of $P$ and $P_i$ (if there are more than one min-cost path, arbitrarily choose one); denote by $v_i$ the first node after $u_i$ on $P_i$. Figure \ref{fig:rel} describes the notations above.

\begin{figure}
\centerline{\includegraphics[scale=0.6]{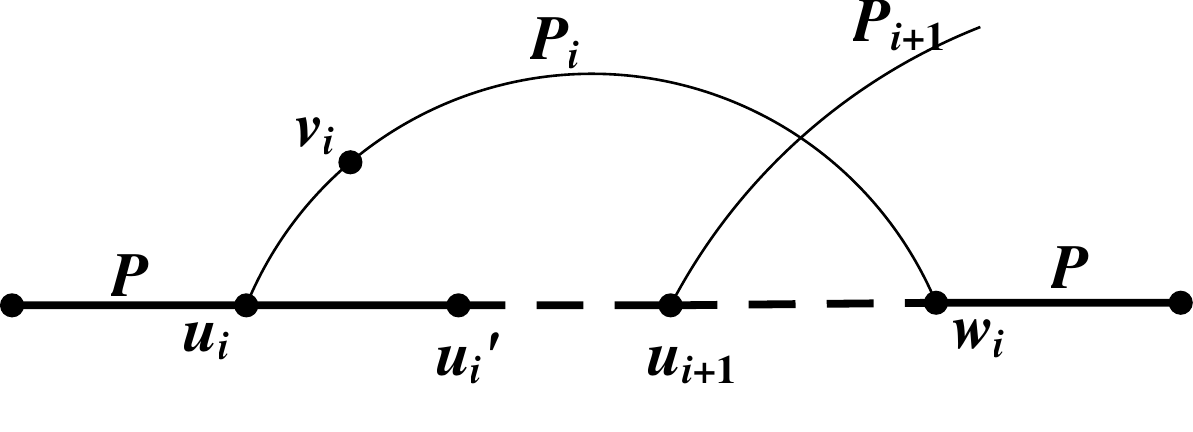}}
\caption{Relationship between nodes.}
\label{fig:rel}
\end{figure}

By the definition of time-inconsistency, the agent at $u_i$ chooses $P$ over $P_i$, we have
\begin{eqnarray*}
d(u_i,w_i)+\beta d(w_i)\geq c(u_i,v_i)+\beta d(v_i,w_i)+\beta d(w_i)\geq c(u_i,u_i')+\beta d(u_i').
\end{eqnarray*}
\indent Add $(1-\beta)d(w_i)$ to both sides of the inequality and use $d(u_i')=c(u_i',u_{i+1})+d(u_{i+1})$, we have
\begin{eqnarray}
d(u_i)&\geq & c(u_i,u_i')+\beta d(u_i')+ (1-\beta)d(w_i)\nonumber\\
&=&c(u_i,u_i')+\beta c(u_i',u_{i+1})+\beta d(u_{i+1})+ (1-\beta)d(w_i). \label{xsk:core inequality}
\end{eqnarray}
\indent Formula (\ref{xsk:core inequality}) is an important inequality for us to obtain our final inequality. We will expand the righthand side of (\ref{xsk:core inequality}) iteratively. For concreteness, we list below the first few steps of the expansion:
\begin{eqnarray*}
d(s)&=& c(s,u_1)+d(u_1) \\
&\geq & c(s,u_1)+c(u_1,u_1')+\beta d(u_1')+(1-\beta)d(w_1)\\
&= & c(s,u_1)+c(u_1,u_1')+\beta c(u_1',u_2)+\beta d(u_2)+(1-\beta)d(w_1) \\
&\geq & c(s,u_1)+c(u_1,u_1')+\beta c(u_1',u_2)+\beta c(u_2,u_2')+\beta^2 d(u_2')\\
&&+\:\beta(1-\beta)d(w_2)+(1-\beta)d(w_1) \\
&\geq &\cdots.
\end{eqnarray*}
We now claim that
\begin{eqnarray}
d(s)\geq\sum_{j=1}^i \left(a_jc(u_{j-1}',u_j)+b_jc(u_j,u_j')\right)+a_{i+1}d(u_i')+\sum_{j\in S_{i+1}}(1-\beta)b_jd\left(w_j\right), \label{xsk:expand form}
\end{eqnarray}
where $\{a_i,i=1,2,\ldots\},\{b_i,i=1,2,\ldots\}$ are coefficients to be determined, and $S_i$ is the set of $i$'s such that $w_i$ lies after $u_{i-1}'$ (if $u_{i-1}'=u_i$, then $u_{i-1}'$ is included). Here $u_0'=s$ and $u_{n+1}=t$. We will prove this claim by induction. Before that, for convenience, for $i=1,2,\ldots,n$, define $t_i$ as follows: if there exists $j$ such that $w_i=u_j$, then $t_i=j$; otherwise if there exists $j$ such that $w_i=u_j'$, then $t_i=j+0.5$; otherwise $t_i$ is the smallest index such that $u_{t_i}$ lies after $w_i$ in $P$ (if no such index, $t_i=n+1$). Since $w_i$ must lie after $u_i'$ in $P$, we can obtain a trivial property that $t_i\geq i+1$. Now $S_i$ can be represented by $\{j|1\leq j<i,t_j\geq i\}$.

For $i=1$, this claim holds trivially.

For inductive cases: if (\ref{xsk:expand form}) holds for $i-1$, then
\begin{eqnarray}
d(s)&\geq &\sum_{j=1}^{i-1} \left(a_jc(u_{j-1}',u_j)+b_jc(u_j,u_j')\right)+a_{i}d(u_{i-1}')+\sum_{j\in S_{i}}(1-\beta)b_jd\left(w_j\right) \nonumber\\
&=&\sum_{j=1}^{i-1} \left(a_jc(u_{j-1}',u_j)+b_jc(u_j,u_j')\right)+a_{i}\left(c(u_{i-1}',u_i)+d(u_i)\right)+\sum_{j:t_j=i}(1-\beta)b_jd\left(w_j\right) \nonumber\\
&&+\:\sum_{j:i<t_j<i+1}(1-\beta)b_jd\left(w_j\right)+\sum_{j\in S_{i}:t_j\geq i+1}(1-\beta)b_jd\left(w_j\right)\nonumber\\
&\geq &\sum_{j=1}^{i-1} \left(a_jc(u_{j-1}',u_j)+b_jc(u_j,u_j')\right)+a_{i}c(u_{i-1}',u_i)+\left(a_i+\sum_{j:t_j=i}(1-\beta)b_j\right)d(u_i)\nonumber\\
&&+\:\sum_{j:i<t_j<i+1}(1-\beta)b_jd\left(u_i'\right)+\sum_{j\in S_{i}:t_j\geq i+1}(1-\beta)b_jd\left(w_j\right) \label{xsk:tricky bound}
\\
&\geq &\sum_{j=1}^{i-1} \left(a_jc(u_{j-1}',u_j)+b_jc(u_j,u_j')\right)+a_{i}c(u_{i-1}',u_i)\nonumber\\
&&+\:\left(a_i+\sum_{j:t_j=i}(1-\beta)b_j\right)\left(c(u_i,u_i')+\beta d(u_i')+ (1-\beta)d(w_i)\right)\nonumber\\
&&+\:\sum_{j:i<t_j<i+1}(1-\beta)b_jd\left(u_i'\right)+\sum_{j\in S_{i}:t_j\geq i+1}(1-\beta)b_jd\left(w_j\right)\nonumber\\
&\geq &\sum_{j=1}^{i-1} \left(a_jc(u_{j-1}',u_j)+b_jc(u_j,u_j')\right)+a_{i}c(u_{i-1}',u_i)+\left(a_i+\sum_{j:t_j=i}(1-\beta)b_j\right)c(u_i,u_i')\nonumber\\
&&+\:\left(\beta \left(a_i+\sum_{j:t_j=i}(1-\beta)b_j\right)+\sum_{j:i<t_j<i+1}(1-\beta)b_j\right)d(u_i')\nonumber\\
&&+\:(1-\beta)\left(a_i+\sum_{j:t_j=i}(1-\beta)b_j\right)d\left(w_i\right)+\sum_{j\in S_{i}:t_j\geq i+1}(1-\beta)b_jd\left(w_j\right). \label{xsk:last step}
\end{eqnarray}
In (\ref{xsk:tricky bound}) we use the property that if $t_j=i$, i.e. $w_j$ lies between $u_{i-1}'$ and $u_i$ (both are included, since $w_j=u_{i-1}'$ in the case that $u_{i-1}'=u_i$) on $P$, then $d(w_j)=c(w_j,u_i)+d(u_i)\geq d(u_i)$. Now if we set
\begin{eqnarray}
a_i=\beta b_{i-1}+\sum_{j:i-1<t_j<i}(1-\beta)b_j,b_i=a_i+\sum_{j:t_j=i} (1-\beta)b_j,\label{xsk:relation between a and b}
\end{eqnarray}
Formula (\ref{xsk:last step}) is almost the same as (\ref{xsk:expand form}) except for the last term. In fact, the following lemma shows (\ref{xsk:last step}) and (\ref{xsk:expand form}) are exactly the same.

\begin{lemma}
For any sequence $\{x_i,i=1,2,\ldots,\}$ and for $m=2,3,\ldots,n$, we have
\[\sum_{j\in S_m}x_j=x_{m-1}+\sum_{j\in S_{m-1}:t_j\geq m}x_j.\] \label{xsk:mathematical lemma}
\end{lemma}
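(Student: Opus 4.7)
The plan is to prove this purely by unfolding the definition of $S_i$ and performing careful index bookkeeping; there is no real analytic content, only a partition of the index set. Recall that $S_i = \{j : 1 \le j < i,\ t_j \ge i\}$, so we rewrite
\[
S_m \;=\; \{j : 1 \le j \le m-1,\ t_j \ge m\}.
\]
The key observation is the trivial property $t_j \ge j+1$ noted in the main proof, which applied to $j = m-1$ gives $t_{m-1} \ge m$. Hence $m-1 \in S_m$ automatically.

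Next I would split $S_m$ into the singleton $\{m-1\}$ and the rest:
\[
S_m \;=\; \{m-1\} \;\sqcup\; \{j : 1 \le j \le m-2,\ t_j \ge m\}.
\]
This union is disjoint because the definition of $S_{m-1}$ requires $j < m-1$, so $m-1 \notin S_{m-1}$. The remaining set needs to be identified with $\{j \in S_{m-1} : t_j \ge m\}$. Since $t_j \ge m$ already implies $t_j \ge m-1$, the condition $t_j \ge m-1$ from the definition of $S_{m-1}$ is automatically satisfied, so
\[
\{j : 1 \le j \le m-2,\ t_j \ge m\} \;=\; \{j \in S_{m-1} : t_j \ge m\}.
\]

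Finally, summing $x_j$ over the disjoint decomposition yields
\[
\sum_{j \in S_m} x_j \;=\; x_{m-1} + \sum_{j \in S_{m-1}: t_j \ge m} x_j,
\]
which is exactly the claimed identity. The only potential pitfall is making sure the edge case $t_{m-1} = m$ (equivalently, $w_{m-1}$ coinciding with or lying just after $u_{m-1}'$) is handled correctly; the inequality $t_j \ge j+1$ covers this uniformly, so no special casing is required. I do not anticipate a genuine obstacle here — the step is a bookkeeping lemma whose sole purpose is to let the inductive expansion in the main proof close up cleanly into the form of equation~(\ref{xsk:expand form}).
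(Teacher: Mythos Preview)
Your proof is correct and follows essentially the same approach as the paper: both hinge on the observation $t_{m-1}\ge m$ (so $m-1\in S_m$) and then decompose the index set. The paper's version is slightly less direct because it first establishes the larger identity $S_{m-1}\cup\{m-1\}=S_m\cup\{j:t_j=m-1\}\cup\{j:m-1<t_j<m\}$, which it also uses later in the proof of Lemma~\ref{xsk:bound lemma}; your streamlined decomposition $S_m=\{m-1\}\sqcup\{j\in S_{m-1}:t_j\ge m\}$ gets to the stated lemma more quickly but does not yield that auxiliary equation.
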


By this lemma, we see that (\ref{xsk:last step}) and (\ref{xsk:expand form}) are of the same form, thus we have proved claim (\ref{xsk:expand form}) by induction. Furthermore, $a_i$ and $b_i$ are determined by (\ref{xsk:relation between a and b}) with $a_1=1$.

Now set $i=n$ in claim (\ref{xsk:expand form}), we have
\begin{eqnarray}
d(s)\geq\sum_{j=1}^n \left(a_jc\left(u_{j-1}',u_j\right)+b_jc\left(u_j,u_j'\right)\right)+a_{n+1}c(u_n',t).\label{xsk:result with a and b}
\end{eqnarray}
Then all we need to do is to find lower bounds for $a_i$ and $b_i$ respectively. The bounds can be obtained by the following lemma.

\begin{lemma}
$b_i\geq a_i\geq \beta^{|S_i|}$. \label{xsk:bound lemma}
\end{lemma}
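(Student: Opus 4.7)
The inequality $b_i \geq a_i$ is immediate from the defining recursion $b_i = a_i + \sum_{j : t_j = i}(1-\beta) b_j$, since every $b_j$ is nonnegative (a routine side-induction from $b_1 = 1$) and $1-\beta \geq 0$. The substance of the lemma is therefore $a_i \geq \beta^{|S_i|}$, which I would establish by strong induction on $i$, with the base case $a_1 = 1 = \beta^0$ and $S_1 = \emptyset$ being clear.

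For the inductive step, assume $a_l, b_l \geq \beta^{|S_l|}$ for all $l < i$, and use
\begin{equation*}
a_i = \beta b_{i-1} + (1-\beta) \sum_{j \in B_i} b_j, \qquad B_i = \{j : t_j = i - 1/2\}.
\end{equation*}
The guiding combinatorial fact is $|S_i| = |S_{i-1}| + 1 - |A_{i-1}| - |B_i|$, where $A_{i-1} = \{j : t_j = i-1\}$; this counts the intervals $(j, t_j]$ that close between steps $i-1$ and $i$. In the easy regime $A_{i-1} = B_i = \emptyset$, one has $|S_i| = |S_{i-1}|+1$ and the bound $a_i \geq \beta b_{i-1} \geq \beta \cdot \beta^{|S_{i-1}|} = \beta^{|S_i|}$ follows directly. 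When $|A_{i-1}| + |B_i| \geq 1$, the term $\beta b_{i-1}$ alone falls short of $\beta^{|S_i|}$ by a factor of $\beta^{|A_{i-1}| + |B_i| - 1}$; I would then further unroll $b_{i-1} = a_{i-1} + (1-\beta) \sum_{j \in A_{i-1}} b_j$, so that $a_i$ is expressed as $\beta a_{i-1}$ plus a positive combination of $(1-\beta) b_j$ over all $j \in A_{i-1} \cup B_i$, and argue that these extra terms recover the missing factor.

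The hardest part will be bounding $\sum_{j \in A_{i-1} \cup B_i} b_j$ sharply enough: the naive substitution $b_j \geq \beta^{|S_j|}$ is insufficient, because the depth $|S_j|$ at a past index can be quite different from $|S_i|$, making the term-by-term comparison go the wrong way. The way out is an amortization argument exploiting that every $j \in A_{i-1} \cup B_i$ lies in $S_\ell$ for each $j < \ell \leq i-1$, so the recursions defining $b_\ell$ along this range have already absorbed $(1-\beta)$ slack that can be credited to index $j$. I would make this precise by applying Lemma~\ref{xsk:mathematical lemma} (with $x_j$ chosen as $\beta^{|S_j|}$ or a normalized variant) to re-index the telescoping sums that arise when $b_j$ is unrolled, converting all exponents onto a common $|S_i|$-scale; the resulting weighted sum of $\beta^{|S_i|}$-scaled terms then totals at least $\beta^{|S_i|}$, which closes the induction.
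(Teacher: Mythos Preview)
Your observation that $b_i \ge a_i$ follows at once from the recursion is correct. The proposed induction on $i$ for $a_i \ge \beta^{|S_i|}$, however, has a genuine gap precisely at the point you flag yourself. When some $j \in A_{i-1}\cup B_i$ closes at step $i$, the inductive hypothesis only gives $b_j \ge \beta^{|S_j|}$, and $|S_j|$ bears no useful relation to $|S_i|$: there can be many indices $j''$ with $j'' < j \le t_{j''} < i-1$, inflating $|S_j|$ well beyond $|S_{i-1}|-1$. In that situation the lower bound $b_j \ge \beta^{|S_j|}$ is exponentially weaker than the $b_j \ge \beta^{|S_{i-1}|-1}$ you actually need, and no amount of ``unrolling $b_j$'' or invoking Lemma~\ref{xsk:mathematical lemma} with $x_j = \beta^{|S_j|}$ will repair this, because those tools only reshuffle the same inadequate lower bounds. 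The induction hypothesis is simply not strong enough to propagate.

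The paper's argument avoids this trap by going in the opposite direction: it first proves the \emph{exact identity}
\[
a_m + \sum_{j\in S_m}(1-\beta)b_j \;=\; 1 \qquad\text{for all }m,
\]
by a short induction using Lemma~\ref{xsk:mathematical lemma}. This conservation law is the key missing idea. From it one extracts \emph{upper} bounds on the $b_j$'s: writing $S_i=\{j_1<\cdots<j_{|S_i|}\}$ and noting that each $j_p$ with $p\le l$ lies in $S_{j_l+1}$, the identity gives $b_{j_l} + \sum_{p<l}(1-\beta)b_{j_p} \le 1$. A second short induction on $l$ then yields $\sum_{p\le l}(1-\beta)b_{j_p} \le 1-\beta^{l}$, and taking $l=|S_i|$ in the conservation law gives $a_i = 1 - \sum_p(1-\beta)b_{j_p} \ge \beta^{|S_i|}$. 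The point is that one needs control on $b_j$ \emph{from above}, not from below, and that control comes from an invariant that your direct induction never sees.
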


Finally we show $|S_i|\leq k-2$ for $i=1,2,\ldots,n+1$. For contradiction, let us assume $|S_i|\geq k-1$ for some $i$. Choose $k-1$ elements from $S_i$, say $j_1,j_2,\ldots,j_{k-1}$ where $j_1<j_2<\cdots<j_{k-1}<i$. For $l=1,2,\ldots,k-1$, since $t_{j_l}\geq i$, we have $w_{j_l}$ must lie after $u_{i-1}$, or $u_{j_{k-1}}$ on $P$, and also $w_{j_l}\neq u_{j_{k-1}}'$ (otherwise $t_{j_l}=j_{k-1}+0.5<i$), $w_{j_l}$ must lie after $u_{j_{k-1}}'$ on $P$. Now consider $u_{j_1},u_{j_2},\ldots,u_{j_{k-1}},u_{j_{k-1}}'$. Observe that the nodes lie on path $P$ in order. For $l=1,2,\ldots,k-1$, $P_{j_l}$ starts at $u_{j_l}$, and the next crossing point of $P_{j_l}$ and $P$ lies after $u_{j_{k-1}}'$ in $P$.

The following lemma states that if a graph has such a structure, it must contain an $\mathcal{F}_k$-minor.
\begin{lemma}
Let $P$ be a path of $G$, and $u_1,u_2,\ldots,u_k$ are nodes on $P$ in order of appearance. If for $i=1,2,\ldots,k-1$, there exists a path $P_i$ such that (i) it starts at $u_i$, and (ii) the second crossing point in $P_i$ with $P$ (say $u_i'$) exists and lies after $u_k$ on $P$, then $\sigma(G)$ contains an $\mathcal{F}_k$-minor. \label{xsk:graph lemma}
\end{lemma}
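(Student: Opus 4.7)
The plan is to exhibit an explicit $\mathcal{F}_k$-minor in $\sigma(G)$ by designating $k$ ``pillar'' branch-sets along the path $P$ and one ``hub'' branch-set built from the detours supplied by the paths $P_i$. For $i=1,\dots,k-1$, let $S_{v_i}$ be the sub-path of $P$ from $u_i$ (inclusive) up to the node immediately before $u_{i+1}$ (inclusive), and let $S_{v_k}=\{u_k\}$. Each $S_{v_i}$ is a sub-path of $P$, hence connected, and the family is pairwise disjoint because the $u_i$ appear in order on $P$. The consecutive edges $(v_i,v_{i+1})$ of $\mathcal{F}_k$ are witnessed by the edge of $P$ that enters $u_{i+1}$ from the last node of $S_{v_i}$.

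For the hub, let $Q_i$ denote the initial segment of $P_i$ from $u_i$ up to $u_i'$ (its second crossing with $P$); by the definition of ``second crossing'', every interior node of $Q_i$ lies off $P$. Let $T$ be the tail of $P$ starting at the node immediately after $u_k$ and running to the end of $P$; this tail is nonempty because each $u_i'$ lies strictly after $u_k$ on $P$. I would define $S_w$ as the union of $T$ together with all interior nodes of every $Q_i$ (the endpoints $u_i'$ already lie in $T$). Since interior nodes of $Q_i$ miss $P$ entirely, $S_w$ is disjoint from every $S_{v_j}$; and $S_w$ is connected, since $T$ is a sub-path of $P$ and each $Q_i\setminus\{u_i\}$ is a path that ends at $u_i'\in T$.

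It remains to verify the $k$ spoke edges $(v_i,w)$ of $\mathcal{F}_k$. For $i\le k-1$, the first edge of $Q_i$ leaves $u_i\in S_{v_i}$ and reaches either an interior node of $Q_i$ or $u_i'$ itself, both of which lie in $S_w$; for $i=k$, the first edge of $T$ connects $u_k\in S_{v_k}$ to a node in $T\subseteq S_w$. This supplies all $k$ required spokes and completes the minor. The main point where care is needed is disjointness: the $P_i$'s may overlap with each other in their interiors, and some later $u_j$ might in principle lie on an earlier $P_i$; both issues dissolve once one observes that truncating at the second $P$-crossing keeps every added node strictly off $P$, so overlaps among the $Q_i$'s only add extra shared nodes inside $S_w$ rather than leaking into any pillar $S_{v_j}$.
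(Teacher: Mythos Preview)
Your construction is correct and is essentially identical to the paper's own proof: your pillars $S_{v_i}$ are precisely the paper's $U_i$ (segments of $P$ from $u_i$ up to but not including $u_{i+1}$, with $U_k=\{u_k\}$), and your hub $S_w=T\cup\bigcup_i(\text{interior of }Q_i)$ is exactly the paper's $W=W'\cup V_1\cup\cdots\cup V_{k-1}$, once one notes that each $u_i'$ already lies in $T=W'$. The verification of connectedness, disjointness, and the required edges matches the paper's argument line for line.
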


%We put the formal proof in the next subsection.
By this lemma we conclude that $\sigma(G)$ contains an $\mathcal{F}_k$-minor, a contradiction. Thus $|S_i|\leq k-2$.\\
\indent Hence by (\ref{xsk:result with a and b}) and Lemma \ref{xsk:bound lemma} we have
\begin{eqnarray*}
d(s)\geq\sum_{j=1}^n \left(\beta^{|S_j|}c\left(u_{j-1}',u_j\right)+\beta^{|S_j|}c\left(u_j,u_j'\right)\right)+\beta^{|S_{n+1}|}c\left(u_n',t\right)\geq \beta^{k-2}c(s,t),
\end{eqnarray*}
which implies
$c(s,t)/d(s)\leq \beta^{2-k}$.
\end{proof}

\subsection{Tightness of the Bound}
So far we have obtained an upper bound for $r$, now we provide an example to show that the bound is achievable. We simply use the example mentioned in \cite{kleinberg2014time} to show a graph with exponential cost ratio, i.e. the graph obtained from $\mathcal{F}_{k-1}$ by adding the corresponding weights (see Figure \ref{akerlof's example}).
\begin{figure}
\centerline{\includegraphics[scale=0.6]{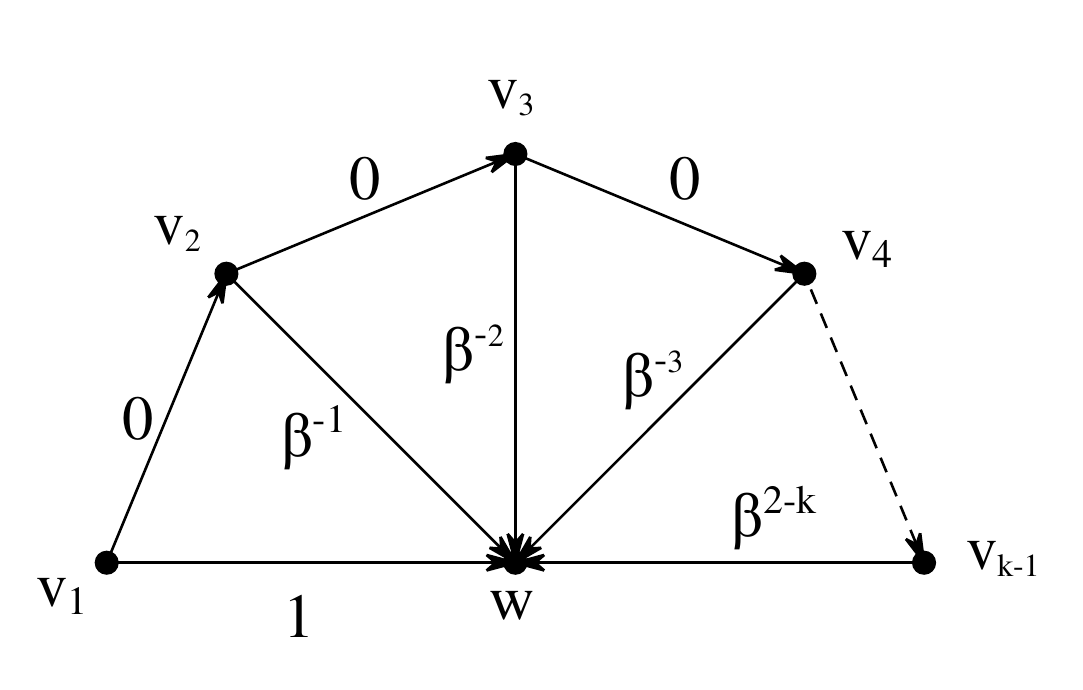}}
\caption{Akerlof's example}
\label{akerlof's example}
\end{figure}

By the analysis in \cite{kleinberg2014time}, the cost ratio of the graph is exactly $\beta^{2-k}$, which proves the tightness of our upper bound.

\section{Hardness of Finding minimal motivating subgraphs}

As mentioned, the basic model introduced in Section 2 can be easily extended to capture abandonment, by placing a reward at the target node. Formally, if the reward is $r$ and the agent is in node $u$, if $\min_v c(u,v)+\beta d(v)>\beta r$, i.e., the discounted cost is less than the discounted reward, the agent abandons the plan.

A natural question in this extended model is whether a time-inconsistent agent can reach the target, and if not, can we delete some nodes and edges to help it reach the goal. The first question is easy to check. To formally investigate the second question, define {\em motivating subgraph} as a subgraph of the original task graph such that the agent can reach the target in the subgraph. A motivating subgraph is minimal if none of its proper subgraph is motivating. We are interested in the following computational question concerning (minimal) motivating subgraph (also the second the open question listed in \cite{kleinberg2014time}): {\em is there a polynomial time algorithm that finds a (minimal) motivating subgraph?}

In what follows, we answer this question negatively (unless NP=P) with the following theorem.

\begin{definition}
{\bf Problem \textsc{ms}}: for an acyclic graph \(G\) with \(n\) nodes, given reward \(r\) on target node and bias factor \(\beta\), find a motivating subgraph of \(G\).
\end{definition}

%\cite{kleinberg2014time} also gave the definition of $\emph{motivating graph}$: a graph is motivating if the agent will arrive at the target node finally. \cite{kleinberg2014time} stated that for a non-motivating graph, we can delete some nodes or edges to make it motivating. The second problem is whether there is a polynomial-time algorithm that ``takes an instance $G$ (including a reward $r$) and determines whether $G$ contains a motivating subgraph with reward $r$?'' (\cite{kleinberg2014time} 563), i.e. to find a motivating subgraph of the original graph. In this section, we are going to show the following theorem:
\begin{theorem}\label{thm:4.2}
\textbf{Problem \textsc{ms}} is NP-hard.
\end{theorem}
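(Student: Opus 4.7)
The plan is to polynomially reduce from 3-SAT. Given a 3-CNF formula $\phi$ with variables $x_1,\ldots,x_n$ and clauses $C_1,\ldots,C_m$, I would construct an acyclic graph $G_\phi$ together with a discount factor $\beta$ and a reward $r$ such that $G_\phi$ admits a motivating subgraph if and only if $\phi$ is satisfiable. Since the construction will be polynomial in $n+m$, this yields NP-hardness.

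$G_\phi$ is assembled in two phases arranged in series between $s$ and $t$. The first phase is a chain of $n$ variable gadgets $V_1,\ldots,V_n$: each $V_i$ has two parallel branches, labelled $T_i$ and $F_i$, between its entry and exit node. Costs inside $V_i$ are arranged---using an Akerlof-style $\mathcal{F}_k$ trap---so that in any motivating subgraph exactly one branch of $V_i$ survives: if both survive, the agent is lured into a subregion where $\min_v c(u,v)+\beta d(v) > \beta r$ and abandons; if neither survives, $t$ is unreachable. The second phase is a chain of $m$ clause gadgets $W_1,\ldots,W_m$. Each $W_j$ has three incoming ``witness'' edges originating from auxiliary nodes inside the three variable gadgets associated with the three literals of $C_j$, calibrated so that the agent avoids abandonment at the entrance of $W_j$ precisely when at least one of these three witness edges is still present in the subgraph.

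Once $G_\phi$ is in place, the correctness argument splits into the two standard directions. For the forward direction, given a satisfying assignment I would retain the branch of $V_i$ matching $x_i$'s value together with all witness edges originating from satisfied literals, and then simulate the agent step by step to check non-abandonment at every visited node. For the backward direction, any motivating subgraph induces a well-defined assignment via the unique surviving branch of each $V_i$, and non-abandonment at the entry of $W_j$ forces at least one literal of $C_j$ to be set true under this assignment, so the assignment is satisfying.

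The main obstacle, as is typical in reductions involving a myopic agent, is calibrating edge costs and $\beta$ so that the gadgets compose without unintended interactions: the agent's decision at $W_j$ must depend only on which witness edges for $C_j$ are present, and not on later gadgets shifting $d(v)$ in subtle ways. I would handle this by making the edge costs along the main chain grow geometrically (in the style of $\mathcal{F}_k$), so that later-stage contributions to any $d(v)$ dominate local costs in a predictable closed form, and by setting $\beta r$ at each clause entrance to sit exactly on the boundary between ``some literal witness present'' and ``all three witness edges removed.'' Verifying that these numerical invariants propagate unchanged through the entire $n+m$ gadget chain, and that the same construction also shows NP-hardness of finding any (e.g.\ maximal) motivating subgraph, is where the bulk of the technical work concentrates.
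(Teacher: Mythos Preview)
Your approach differs substantially from the paper's. The paper does not reduce 3-SAT directly to \textsc{ms}. It first proves that the \emph{minimal} motivating subgraph problem \textsc{mms} is NP-hard via a 3-SAT reduction (Theorem~\ref{thm:4.4}), and then derives Theorem~\ref{thm:4.2} by a five-line Turing reduction: assuming a poly-time oracle for \textsc{ms}, repeatedly delete an edge while a motivating subgraph still exists and output the result; this yields a minimal motivating subgraph, so \textsc{mms} would be in P, contradicting Theorem~\ref{thm:4.4}. The point of this detour is that the paper's gadget construction leans on minimality throughout---Lemmas~\ref{lemma4.5}--\ref{lemma4.6} all assume $G'$ is a \emph{minimal} motivating subgraph; for instance, each $v_j$ retains exactly one path to $w$ only because a second, redundant path could be deleted without destroying motivation.

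Your direct reduction tries to enforce the analogous uniqueness without minimality, via the claim that if both $T_i$ and $F_i$ survive then the agent is lured into a trap and abandons. This is the crux of your argument and, as stated, I do not see how to make it work. If $T_i$ and $F_i$ are genuinely parallel branches from the entry to the exit of $V_i$, then at every interior node of $T_i$ the presence or absence of $F_i$ does not change the set of outgoing edges, so the agent abandons on $T_i$ alone exactly when it abandons on $T_i$ with $F_i$ also present; hence ``each branch alone is fine but both together force abandonment'' is impossible with parallel branches. And whatever extra lure edges you add inside $V_i$ to create a trap can simply be deleted---we are quantifying over \emph{all} subgraphs, not just minimal ones---leaving both $T_i$ and $F_i$ intact and the trap gone. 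The paper's indirect route through \textsc{mms} is designed precisely to sidestep this difficulty: once minimality is available, ``delete the redundant option'' becomes a legitimate step in the structural lemmas, and the gadget engineering is much lighter.
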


Before proving Theorem \ref{thm:4.2}, consider an easier complexity problem related to minimal motivating subgraphs.

\subsection{Hardness of finding minimal motivating subgraph}
In this section, we show that finding a minimal motivating subgraph is hard.
\begin{definition}
\textbf{Problem \textsc{mms}}: {\it for an acyclic graph \(G\) with \(n\) nodes, given reward \(r\) on target node and present bias \(\beta\), find a minimal motivating subgraph of \(G\)}.
\end{definition}

We have the following theorem.

\begin{theorem}\label{thm:4.4}
\textbf{Problem \textsc{mms}} is NP-hard.
\end{theorem}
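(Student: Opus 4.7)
The plan is to reduce from a standard NP-hard problem such as 3-SAT. Given an instance $\varphi$ on $n$ variables and $m$ clauses, I would construct in polynomial time a task graph $G$, a reward $r$ at the target $t$, and a discount factor $\beta$ such that $G$ admits a minimal motivating subgraph if and only if $\varphi$ is satisfiable. Note that any motivating subgraph contains a minimal one obtainable by greedy edge/node deletion, so the existence of a minimal motivating subgraph is equivalent to the existence of any motivating subgraph, and proving hardness of the corresponding decision version suffices.

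The key leverage is that in a minimal motivating subgraph $H$, every edge is ``load-bearing'': deleting any edge either destroys $s$-to-$t$ reachability or, by increasing the distance function $d(\cdot)$ within $H$, causes the agent to abandon at some intermediate node. Thus the edges of $H$ must be precisely tuned: the agent's realized path must reach $t$, and every non-path edge of $H$ must be needed in order to keep some alternative outgoing edge at an earlier node less attractive than the edge actually taken. This dual role of edges, on-path versus threat-of-deviation, is what lets me encode Boolean constraints.

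Concretely, I would thread a main $s$-to-$t$ spine through $n$ variable gadgets and $m$ clause gadgets, choosing $r$ so that the agent is only just barely willing to continue. A variable gadget offers two parallel branches corresponding to $x_i$ and $\neg x_i$ between a common entry and exit; minimality should force exactly one branch to survive, since otherwise the surplus branch could be pruned without disturbing the agent's path. A clause gadget acts as a bottleneck whose discounted cost the agent can absorb only when at least one of its literal branches is still present in $H$, which encodes clause satisfaction. Conversely, any satisfying assignment should yield a natural candidate minimal motivating subgraph: keep the branch corresponding to the chosen value of each variable and prune the other.

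The main obstacle is twofold. First, because the agent's decision at each node depends globally on $d(\cdot)$ inside $H$, an edge deletion in one gadget can ripple through the graph and couple distant gadgets in subtle ways; I would calibrate costs with increasing powers of $\beta$ along the spine so that each gadget's contribution to the agent's discounted reasoning is essentially local and decays predictably from the target outward. Second, I must rule out ``mixed'' minimal subgraphs that satisfy the motivation condition by accident without corresponding to any clean Boolean assignment; the variable gadget has to be designed so that retaining both branches always leaves at least one edge safely removable, violating minimality. Making all of these inequalities simultaneously tight under a single choice of $\beta$ and $r$ is the most delicate step of the reduction.
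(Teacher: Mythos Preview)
Your high-level plan---reduce from 3-SAT, encode variables by a binary choice that minimality forces, and encode clauses by a gadget that fails unless some literal is selected---is the same strategy the paper uses, and your observation that existence of a minimal motivating subgraph coincides with existence of any motivating subgraph is exactly what the paper exploits later to lift the hardness of \textsc{mms} to \textsc{ms}.

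Where your proposal diverges, and where the real content lies, is in the architecture. You place variable gadgets \emph{on} the spine as parallel branches the agent might traverse, and you leave open how a downstream clause gadget can detect which branch survived. The paper's construction is different in a way that sidesteps this coupling problem: the agent's realized path is a single ``bus'' $s\to u_1\to\cdots\to u_m\to w_1\to\cdots\to t$, and the variable and clause structure lives entirely \emph{off} the bus as decoys the agent never enters. Each variable $x_j$ is a node $v_j$ with two paths to a common sink $w$ (an ``expensive'' two-edge path and a ``cheap'' one-edge path); each clause node $u_i$ has an edge to each of its three literal nodes $v_j$, and that edge is expensive if the literal is $x_j$ and cheap if it is $\neg x_j$. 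The weights are calibrated so that at $u_i$ the agent continues along the bus if and only if some detour $u_i\to v_j\to w\to t$ looks affordable, which happens precisely when the $(u_i,v_j)$ edge and the surviving $v_j\to w$ path are of \emph{opposite} types (one expensive, one cheap). Minimality then forces exactly one $v_j\to w$ path and exactly one outgoing decoy edge at each $u_i$ (here the paper invokes the Kleinberg--Oren structural fact that nodes in a minimal motivating subgraph have out-degree at most two), and the opposite-type constraint is what translates back to satisfaction of each clause.

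So the gap in your proposal is not the reduction source or the minimality leverage, both of which you have right, but the concrete mechanism by which a clause gadget reads the variable choice. Your on-spine parallel branches would need additional wiring to be visible from clause gadgets, and once you add that wiring the ``surplus branch can always be pruned'' argument for minimality no longer goes through cleanly. The paper avoids this by keeping the agent on a fixed path and using off-path structure purely to shape $d(\cdot)$; that separation of ``what the agent does'' from ``what the agent sees'' is the idea you are missing.
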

\begin{proof}
We show that finding a valid assignment to a 3-CNF can be polynomial-time reduced to an instance of \textsc{mms}. Consider a 3-CNF with \(n\) variables \(x_1,x_2,...,x_n\) and \(m\) clauses \(C_1,C_2,...,C_m\). For fixed \(\beta\), let \(f=1-\frac{1}{2}\beta-\frac{1}{2}\beta^2\), \(z=2+\frac{3}{2}\beta+\frac{1+\beta}{1-\beta}\), \(\ell=\lceil\frac{z}{f}\rceil\). Construct a weighted acyclic graph \(G\) as follows:
\begin{enumerate}
\item Each clause \(C_i\) corresponds to one node \(u_i\), \(1\leq i\leq m\);

\item Each variable \(x_i\) corresponds to two nodes \(v_i\) and \(v'_{i}\), \(1\leq i\leq n\);

\item The other nodes are start node \(s\), target node \(t\) and interior points \(w_1,w_2,\cdots,w_{\ell}\) and \(w\);

 \item For any clause \(C_i=y_{i,1}\vee y_{i,2}\vee y_{i,3}\): for \(j=1,2,3\), if \(y_{i,j}\) is \(x_k\), then there is an edge \((u_i, v_j)\) with weight 2 and we call such an edge an ``expensive edge''; if \(y_{i,j}\) is \(\neg x_k\), then there is an edge \((u_i, v_{j})\) with weight \(1+\beta\) and we call such an edge a ``cheap edge''.

\item For any \(1\leq i\leq n\): there is an edge \((v_i, v'_{i})\) with weight \(1-\beta\) and an edge \((v'_i,w)\) with weight 0. The two edges form an ``expensive path'' from \(v_i\) to \(w\). Also, there is an edge \((v_i, w)\) with weight 0, and we call such an edge a ``cheap path'' from \(v_i\) to \(w\).

\item \((s, u_1)\), \((u_m, w_1)\) are edges with weight \(f\). For each \(1\leq i\leq m-1\), \((u_i, u_{i+1})\) forms an edge with weight \(f\). For each \(1\leq j\leq \ell-1\), \((w_i, w_{i+1})\) forms an edge with weight \(f\). Edge \((w_{\ell},t)\) has weight \(f+z-f\ell\). These edges form a path from \(s\) to \(t\), and we call the path the ``bus'' of the graph. Finally there is an edge \((w,t)\) with weight \(\frac{3}{2}\beta+\frac{1+\beta}{1-\beta}\).
\end{enumerate}

To visualize the construction, Figure 4 is a graph constructed from 3-CNF \((x_1\vee\neg x_2\vee x_3)\wedge(x_2\vee\neg x_3\vee x_4)\) by following the rules above, and in this example \(\beta\) is set to be 0.9.
\begin{figure}[htbp]
\centering\includegraphics[width=4.5in]{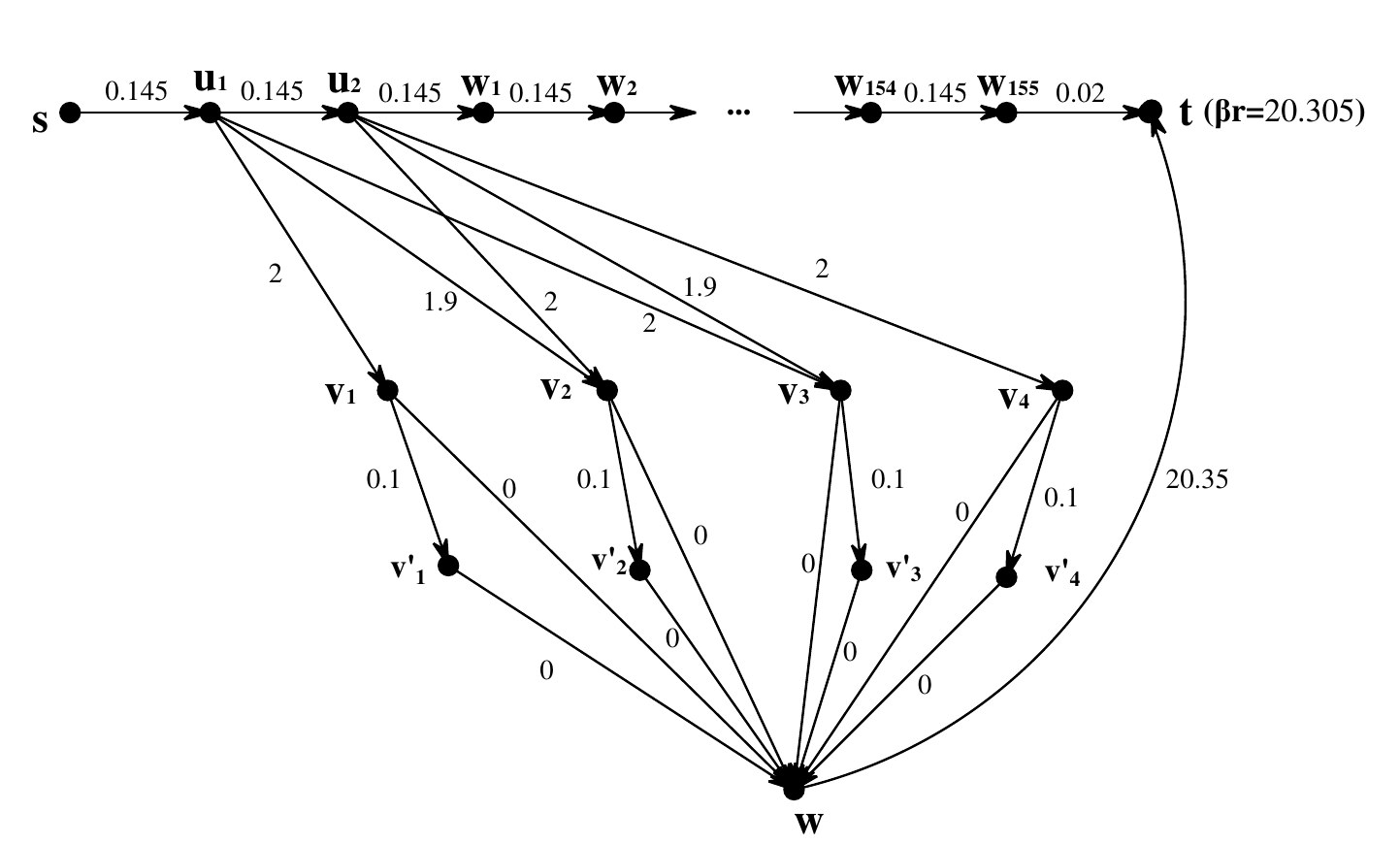}
\caption{Corresponding graph of \((x_1\vee\neg x_2\vee x_3)\wedge(x_2\vee\neg x_3\vee x_4)\) with \(\beta=0.9\).}\label{example0201:1}
\end{figure}

To form an instance of \textsc{mms}, define the reward at \(t\) to be \(r=1+\frac{1}{2}\beta+\beta^{-1}+\frac{2}{1-\beta}\). It can be easily seen that such an instance can be constructed from the 3-CNF in polynomial time. Now we have the following lemmas about the structure of minimal motivating subgraph of \(G\).

\begin{lemma}\label{lemma4.5}
Let \(G'\) be a minimal motivating subgraph of \(G\), then the bus of \(G\) is contained in \(G'\). The only possible route that the agent travels through must be the bus.
\end{lemma}
\begin{proof}
Assume by contradiction that some nodes on the bus are not included in \(G'\). Consider the path from \(s\) to \(t\) traveled by the agent in \(G'\), we may easily observe that \(w\) is on the path. When the agent arrives at \(w\), the evaluated cost to the target will be \(\frac{3}{2}\beta+\frac{1+\beta}{1-\beta}>\beta+\frac{1}{2}\beta^2+\frac{1+\beta}{1-\beta}=\beta r\), thus the agent will not continue, which is a contradiction to the fact that the graph is motivating. Thus the bus of \(G\) is contained in \(G'\). The reasoning also shows that the agent should never deviate from the bus.
\end{proof}
\begin{lemma}\label{lemma4.55}
Let \(G'\) be a minimal motivating subgraph of \(G\). In \(G'\) for any \(1\leq j\leq n\), if \(v_j\in G'\), then in \(G'\) there exists exactly one path from \(v_j\) to \(w\).
\end{lemma}
\begin{proof}
If \(v_j\) starts no path to \(w\), then \(v_j\) cannot reach \(t\) in \(G'\). We can observe that after deleting \(v_j\) the graph is still motivating, thus \(G'\) is not minimal, which is a contradiction. If both paths from \(v_j\) to \(w\) remains in \(G'\), we eliminate the expensive path from \(v_j\) to \(w\) while keeping the graph's property of motivating, which also contradicts the minimality of \(G'\). Thus in \(G'\) \(v_j\) starts a unique path to \(w\).
\end{proof}
\begin{lemma}\label{lemma4.6}
Let \(G'\) be a minimal motivating subgraph of \(G\). In \(G'\) for any \(1\leq i\leq m\), \(u_i\) connects to exactly one node \(v_j\) not on the bus, while path \(u_i\to v_j\) and \(v_j \to w\) cannot be both expensive or both cheap.
\end{lemma}
\begin{proof}
We apply induction on \(i\). Firstly we consider the base case where \(i=m\). Theorem 5.1 in \cite{kleinberg2014time} stated that every node in a minimal motivating subgraph has no more than two outgoing edges. Since \(u_i\) has one out-degree in the bus from Lemma \ref{lemma4.5}, there exists at most one path from \(u_i\) to \(w\). If \(u_i\) connects no edge to nodes outside the bus, at \(u_{i-1}\) the agent will evaluate the cost to \(t\) as \(f+\beta (f+z)>\beta r\), thus the agent will stop at \(u_{i-1}\), contradictory with \(G'\) is motivating. Now we assume \(u_i\) connects to node \(v_j\), from Lemma \ref{lemma4.55} \(v_j\) starts exactly one path to \(w\). If \(u_i\to v_j\) and \(v_j \to w\) are both expensive, then the cost from \(u_i\) to \(w\) will be \(3-\beta\). At \(u_{i-1}\) the agent will evaluate the cost to \(t\) as \(f+\beta(3-\beta+\frac{3}{2}\beta+\frac{1+\beta}{1-\beta})>\beta r\), thus the agent will stop at \(u_{i-1}\), which forms a contradiction. If both paths are cheap, the agent will consider traveling through the bus (with evaluated cost \(f+\beta z\)) more expensive than going out of the bus (with cost \(1+\beta+\beta(\frac{3}{2}\beta+\frac{1+\beta}{1-\beta})\)), which means that the agent cannot reach the goal following Lemma \ref{lemma4.5}.  Thus the statement is true when \(i=m\).

Assume that the lemma is true for \(i=k\), consider the case where \(i=k-1\). If \(u_i\) connects no edge to nodes outside the bus, at \(u_{i-1}\) the agent will evaluate the cost of continuing as at least \(f+\beta (f+2+\frac{3}{2}\beta+\frac{1+\beta}{1-\beta})>\beta r\) by induction. Thus the agent stops at \(u_{i-1}\), contradictory with \(G'\) is motivating. Now we assume \((u_i,v_j)\in G'\), from Lemma \ref{lemma4.55} \(v_j\) starts a unique path to \(w\). The same as the above case \(u_i\to v_j\) and \(v_j \to w\) cannot be both expensive. If both paths are cheap, at \(u_{i-1}\) the agent will consider traveling through the bus (with evaluated cost \(f+\beta (2+\frac{3}{2}\beta+\frac{1+\beta}{1-\beta})\)) more expensive than going out of the bus (with cost \(1+\beta+\beta(\frac{3}{2}\beta+\frac{1+\beta}{1-\beta})\)), which means that the agent deviate from the bus and will not reach the goal following Lemma \ref{lemma4.5}. Therefore the statement is true for \(i=k-1\), thus the lemma is true for all \(1\leq i\leq m\).
\end{proof}

Now we show how a valid assignment to a 3-CNF can be transformed to a corresponding minimal motivating subgraph of \(G\).
\begin{lemma}\label{lemma4.7}
If a 3-CNF is satisfiable, then graph \(G\) constructed from the formula contains a minimal motivating subgraph.
\end{lemma}
\begin{proof}
Construct a subgraph \(G'\) of the original graph \(G\) as follows. For each \(1\leq i\leq n\), if \(x_i\) is assigned true, remove \(v'_i\) and adjacent edges; otherwise remove edge \((v_i,w)\). For \(1\leq i\leq m\), let \(C_i=y_{i_1}\vee y_{i_2}\vee y_{i_3}\), here \(y_{i_j}\) denotes \(x_{i_j}\) or \(\neg x_{i_j}\). Assume WLOG. that \(y_{i_j}\) is assigned true, then preserve only \((u_i,v_{i_j})\) among \((u_i,v_{i_1}),\ (u_i,v_{i_2})\) and \((u_i,v_{i_3})\), eliminate the other two edges from \(G\). We can verify that the remaining graph is motivating as the time-inconsistent agent will always follow the bus.

Finally, we remove all nodes with no in-degree. The minimality of the graph follows from Lemma \ref{lemma4.5} and \ref{lemma4.6}.
\end{proof}

To see how Lemma \ref{lemma4.7} works, again consider formula \((x_1\vee\neg x_2\vee x_3)\wedge(x_2\vee\neg x_3\vee x_4)\) as an example. A valid assignment is \(x_1=\neg x_2=\neg x_3=x_4=1\), and the corresponding minimal motivating subgraph is shown in Figure \ref{fig:5}. The same as before we set \(\beta=0.9\).
\begin{figure}[htbp]
\centering\includegraphics[width=4.2in]{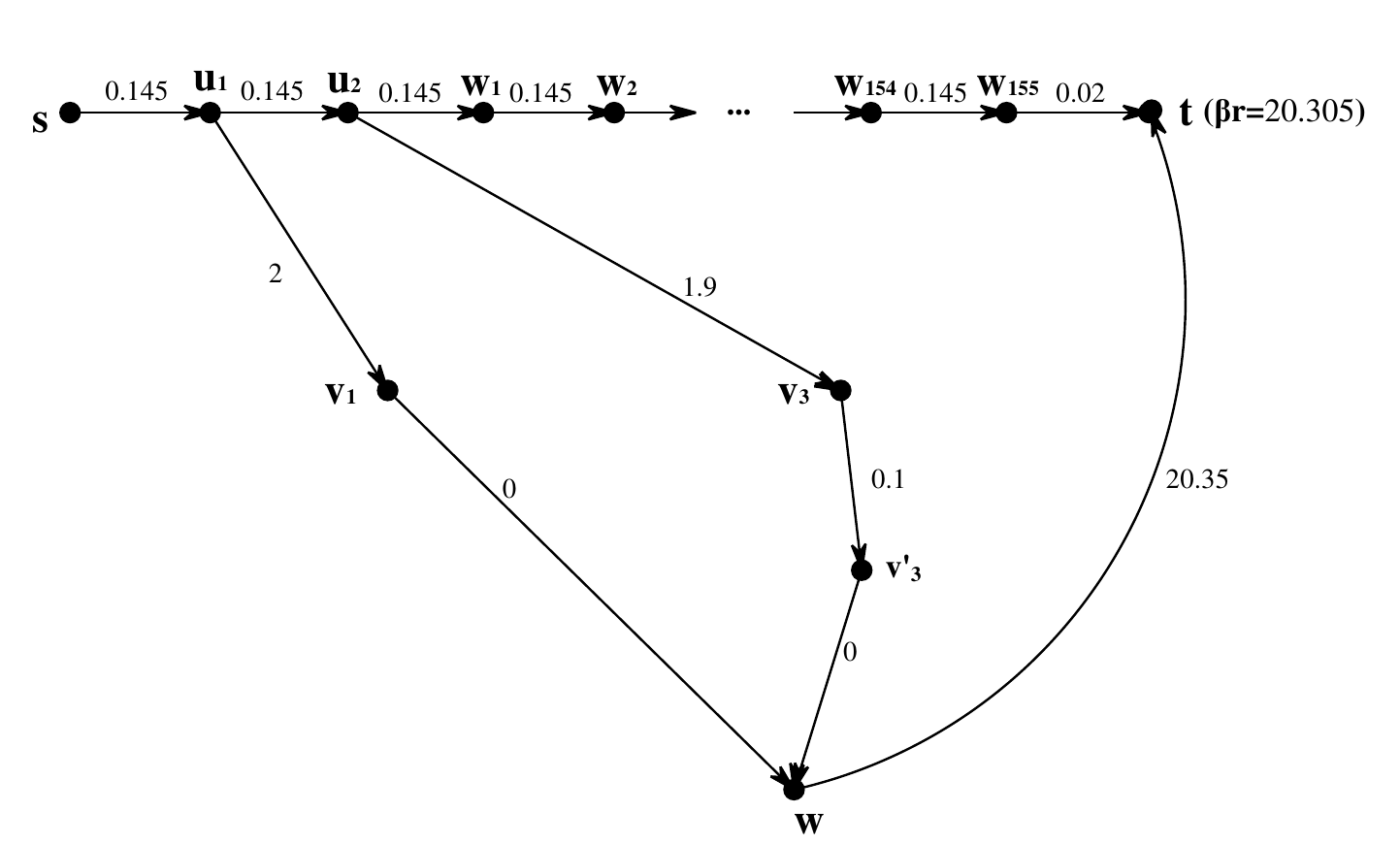}
\caption{Corresponding minimal motivating subgraph of \((x_1\vee\neg x_2\vee x_3)\wedge(x_2\vee\neg x_3\vee x_4)\).}\label{fig:5}
\end{figure}

Finally, we show how a minimal motivating subgraph of \(G\) can be transformed to a valid assignment to the 3-CNF.
\begin{lemma}\label{lemma4.8}
Let \(G'\) be a minimal motivating subgraph of \(G\). For any \(1\leq j\leq n\), if the expensive path from \(v_j\) to \(w\) remains, assign \(x_j=0\), otherwise assign \(x_j=1\), then the assignment is a valid assignment to the original 3-CNF.
\end{lemma}
\begin{proof}
From Lemma \ref{lemma4.6} we know that for any \(1\leq i\leq m\), \(u_i\) connects to exactly one node \(v_j\) outside the bus. If the path from \(v_j\) to \(w\) is cheap, \((u_i,v_j)\) must be expensive, implying that \(x_j\) is in clause \(C_i\), thus the assignment \(x_j=1\) can satisfy \(C_i\). Otherwise, \((u_i,v_j)\) must be cheap, implying that \(\neg x_j\) is in clause \(C_i\), therefore \(C_i\) is satisfied by the assignment letting \(x_j=0\). Thus all clauses are satisfied by the assignment, which means the assignment is a valid assignment to the original 3-CNF.
\end{proof}

From Lemma \ref{lemma4.7} and Lemma \ref{lemma4.8}, we know that if we can find a minimal motivating subgraph for any graph in polynomial time, we may find a polynomial time algorithm finding a valid assignment for any 3-CNF, which finishes the proof of Theorem \ref{thm:4.4}. \(\hfill \Box\)
\end{proof}

\subsection{Main Proof}
Now we are going to prove Theorem \ref{thm:4.2}: it is hard to find any motivating subgraph, not just minimal ones.
\begin{proof}
Assume by contradiction we have a poly-time algorithm \(\mathcal{A}\) which can solve \textsc{ms}. Notice that we can use \(\mathcal{A}\) to check whether a graph has a motivating subgraph. Now we propose the following algorithm that solves \textsc{mms} by calling \(\mathcal{A}\).

For a given graph \(G\), check whether \(G\) contains a motivating subgraph. If not, we reject the input. Repeatedly remove an edge from \(G\) such that the remaining graph still includes a motivating subgraph, until no edge can be removed.

The correctness of the algorithm is straightforward, while the running time of the algorithm is polynomial of the size of the graph. From Theorem \ref{thm:4.4} we know that \textsc{ms} is NP-hard.
\end{proof}

\section{Hardness of motivating agents by placing intermediate reward}

In this section, we consider the question of whether we can place intermediate reward on internal nodes to motivate the agent. Denote by $r(v)$ the reward for node $v$; the agent at node $u$ will continue to move if and only if there exists a path $P$ from $u$ to $t$ such that
\[c'(P)=c(u,v_0)+\beta\sum_{v\in P,v\ne u,t} (c(v,v')-r(v))\leq 0,\]
where $v_0$ denotes the first node after $u$ in $P$, and $v'$ denotes the node after $v$ on $P$. Let $c(t,t')=0$. The agent chooses the path that minimize $c'(P)$. We are interested in finding the configuration that places the smallest total sum of reward that motivate the agent to reach $t$ (which is the third open problem in \cite{kleinberg2014time}).

\subsection{Problem statement}

%@Students, the paragraph below is very messy. It contains a lot of grammar mistakes and ambiguity. Please rewrite. Do you think we can remove this paragraph?

As suggested by \cite{kleinberg2014time}, multiple versions of this problem are considered. In the first two versions, we restrict attention to positive rewards. In the first version, we might put rewards that are never claimed. We avoid this in the second version: all rewards that we put must be claimed. In the third version, we consider the possibility of placing negative rewards, which can be used to prevent the agent from entering bad paths. In this version, we want to minimize the sum of absolute values of all rewards.

%We consider multiple versions of this problem, as suggested by \cite{kleinberg2014time}. In the first version, only positive rewards are allowed, and we can put rewards on every node of the graph. But due to the trivialness of this problem, in this we might put rewards that is never claimed - the agent simply use the rewards as on its shortest path but then deviates from its original computation. In the third version, we consider the possible use of negative rewards; the negative rewards might be used to get the agent rid of paths that appears to be costless but then becomes very costly(for example, a path start with 0 and then an edge with a very large cost). In this version, we define the sum of rewards to be the sum of absolute values of all rewards put on nodes; we do this because if we define the sum as the direct sum, we can put a rewards of $-\infty$ at some node that there still exists a path from $s$ to $t$ if deleted.

We formally define this problem:
\begin{definition}
\emph{Minimum Total Rewards with bias factor $\beta$ (\mtr$_\beta$)}
Given a weighted acyclic graph $G$, node $s,t$, and a real number $R$, decide whether there exists a reward configuration $r(v),\forall v\in G$ such that the agent is motivated to reach the goal, and that
 $\sum_{v\in G} |r(v)|\leq R.$

There are three versions of the problem, depending on the constraints on the rewards:\\
\mtr$_\beta$ \Rmnum{1}: $r(v)\geq 0,\forall v$.\\
\mtr$_\beta$ \Rmnum{2}: $r(v)\geq 0, \forall v\in P; r(v)=0,\forall v\not\in P$, where $P$ is the path that is actually taken by the agent.\\
\mtr$_\beta$ \Rmnum{3}: $r(v)\in \mathbb{R}$.\\
\end{definition}

\subsection{Main Theorem}
We show that all versions of this problem are NP-hard:
\begin{theorem}
\mtr$_\beta$ \Rmnum{1},\Rmnum{2},\Rmnum{3} are NP-hard for all $\beta<1$.
\label{thm:MTR}
\end{theorem}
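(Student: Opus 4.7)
My plan is to prove all three versions NP-hard by polynomial-time reduction from 3-SAT, adapting the gadget construction used in the proof of Theorem~\ref{thm:4.4}. Given a 3-CNF $\phi$ with variables $x_1,\ldots,x_n$ and $m$ clauses, I would construct a weighted acyclic graph $G_\phi$ with essentially the topology of Figure~\ref{example0201:1}: a ``bus'' $s \to u_1 \to \cdots \to u_m \to t$ with literal side-paths hanging off each clause node $u_i$ through variable nodes $v_j, v_j'$ that converge at a common node $w$ before $t$. Edge weights along the bus will be tuned so that, with no rewards at all, the agent refuses even to start at $s$.

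I would set the target budget $R = m\rho$ for a specific $\rho = \rho(\beta) > 0$ and argue that (i)~at each clause node $u_i$, keeping the agent on the bus requires at least $\rho$ of reward placed on some literal gadget of $C_i$; (ii)~such rewards cannot be shared across different clauses except through the common variable nodes $v_j, v_j'$; and (iii)~a reward at $v_j$ (interpreted as ``$x_j = \text{true}$'') cannot double as a reward at $v_j'$ (``$x_j = \text{false}$''). Together these force the minimum total reward to equal $R$ if and only if $\phi$ admits a satisfying assignment, yielding the reduction for \mtr$_\beta$~\Rmnum{1}.

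For \mtr$_\beta$~\Rmnum{2}, I would verify that any optimum from Version~\Rmnum{1} can be realised on the agent's traversed path (the bus plus one literal side-path per clause), since in $G_\phi$ off-path reward is never useful. For \mtr$_\beta$~\Rmnum{3}, I would design $G_\phi$ so that negative rewards cannot help: the agent's local choice at each $u_i$ is essentially between ``continue along the bus'' and ``abandon'', and a negative reward only makes continuation locally more costly while contributing to $\sum|r(v)|$; any gain from blocking a detour can be matched by reshuffling positive rewards on the correct gadget. Thus any optimal Version~\Rmnum{3} configuration is in fact non-negative, and the reduction carries over.

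The main obstacle will be calibrating weights so that assertions (i)--(iii) hold \emph{tightly}: ruling out clever pooling of sub-$\rho$ rewards across clauses, inconsistent assignments that exploit shared variable nodes, and negative-reward tricks in Version~\Rmnum{3}. Establishing these will require a careful node-by-node analysis of the agent's myopic decisions in the spirit of Lemmas~\ref{lemma4.5}--\ref{lemma4.6}, coupled with a judicious choice of $\rho$ and of the weight scheme on the ``expensive''/``cheap'' edges so that the agent's threshold for continuation at each $u_i$ is met with equality precisely by $\rho$ of reward.
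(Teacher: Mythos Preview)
Your high-level plan---reduce from 3-SAT via a bus-and-gadget graph---is the same as the paper's, but the concrete construction you sketch has two genuine gaps that the MMS topology of Figure~\ref{example0201:1} cannot repair.

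First, the budget $R=m\rho$ is inconsistent with your own clause~(ii). If rewards sit on the shared variable nodes $v_j,v_j'$, then a single reward of $\rho$ at $v_j$ serves \emph{every} clause containing $x_j$, so the minimum total reward for a satisfiable instance is not $m\rho$ but something tied to the number of variables you actually set. Conversely, if you block sharing by putting rewards on per-clause gadgets, then clause~(iii) has no bite and nothing enforces consistency between $x_j$ and $\neg x_j$. The paper resolves this by making the budget variable-indexed, $R=nx+r_t$, and by forcing exactly one of each pair $(v_i,v_i')$ to carry reward $x$.

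Second, your treatment of Version~\Rmnum{2} fails in the MMS graph: there the agent never leaves the bus, so any reward on $v_j$ is off-path and disallowed. Your parenthetical ``the bus plus one literal side-path per clause'' is not what the agent does in that graph. The paper's construction is structurally different for exactly this reason: after a long clause-testing section (nodes $a_{i,j},b_i$), the agent enters a variable section $c_1\to\cdots\to c_n\to u_n^*\to v_n^*\to\cdots\to t$ in which it \emph{actually traverses} one of $v_i,v_i'$ for each $i$ and collects the reward there. A single counting argument at the entrance $c_1$ shows that at least $nx+r_t$ of reward must lie on this downstream section, which simultaneously (a)~pins all reward to the traversed path (handling~\Rmnum{2}), (b)~leaves no room for reward on the clause section, and (c)~rules out negative rewards in~\Rmnum{3}, since any negative amount would only tighten the $c_1$ inequality while still contributing to $\sum|r(v)|$. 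Your informal ``negative rewards only make continuation more costly'' does not by itself exclude using a negative reward to block a detour and thereby save positive reward elsewhere; the paper's tight accounting at $c_1$ does.
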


We prove the theorem by reducing 3-SAT to \mtr$_\beta$. The reduction graph is depicted in Figure \ref{graph3:1} (detailed description will follow). The proof idea is as below: We construct a long graph body with two long strings ($u_nv_nu_{n-1}v_{n-1}\cdots u_0$ and $u'_nv'_nu'_{n-1}v'_{n-1}\cdots u'_0$), and set weight in the graph such that there is reward on one and only one of $v_i$ and $v_i'$ in the minimum reward setting; also, the one with reward depends on the 3-SAT clause so as to construct a valid assignment.

\subsection{Graph construction}
For any 3-SAT instance, suppose it has $m$ clauses and $n$ variables. \\
Let $x=\beta^{-1}-1, y=\frac12 x$, and
\begin{eqnarray}
g_k&=&6(2n-k),\nonumber\\
h_k&=&6(2n-k)+6+y,\nonumber\\
r_t&=&12n-6+6\beta^{-1},\nonumber\\
l&=&\left\lceil \frac{nx+12n+6\beta^{-1}-6}{\beta x}\right\rceil+1.\nonumber
\end{eqnarray}
We choose $n$ large enough such that
\[nx=n(\beta^{-1}-1)>2\beta^{-1}.\]
We construct graph $G$ in the following way: We construct $l+1$ nodes for each clause $i$, namely $a_{i,1},a_{i,2},\cdots,a_{i,l},b_i$; and 4 nodes $u_i,u_i',v_i,v_i'$ for each variable $i$. We also have node $c^*,s$ (source) and $t$ (terminal). We construct edges using the following configuration: (A triple below denotes the (start node, end node, weight) of an edge )
\begin{enumerate}
\item $(s,a_{1,1},0);(a_{i,1},b_i,0);(a_{i,j},a_{i,j+1},\beta x);(a_{i,l},a_{i+1,1},0);(1\leq i\leq m-1,1\leq j\leq l-1)$
\item $(a_{m,l},c_1,0),(c_i,c_{i+1},6),(c_n,u_n,6),(c_n,u_n',6);(1\leq i\leq n-1)$
\item $(u_i,v_i,x);(u_i',v_i',x);(1\leq i\leq n)$
\item $(v_i,u_{i-1},6);(v_i,u_{i-1}',6);(v_i',u_{i-1},6);(v_i',u_{i-1}',6);(1\leq i\leq n)$
\item $(u_0,t,0);(u_0',t,0).$
\end{enumerate}

%\TODO{By using ''not variable $k$'', I try to distinguish between node $v_k$ and variable $k$. Would this change cause ambiguity? @Students, okay, change it back then.}

For a clause $i$ that contains variable $k$, we construct $(b_i,u_{k-1},h_k),(b_i,u_{k-1}',h_k)$, If it contains a positive subclause of $k$, we construct $(a_{i,j},v_k,g_k),2\leq j\leq l$; otherwise if it contains ''not variable $k$'', we construct $(a_{i,j},v_k',g_k),2\leq j\leq l$. An illustration of the graph is depicted in Figure \ref{graph3:1}. For simplicity, direct values of $h_k$ and $g_k$ are drawn. The graph shows an example of a clause with $v_n$ and $\neg v_k$.
\begin{figure}[htbp]
\centering\includegraphics[width=400px]{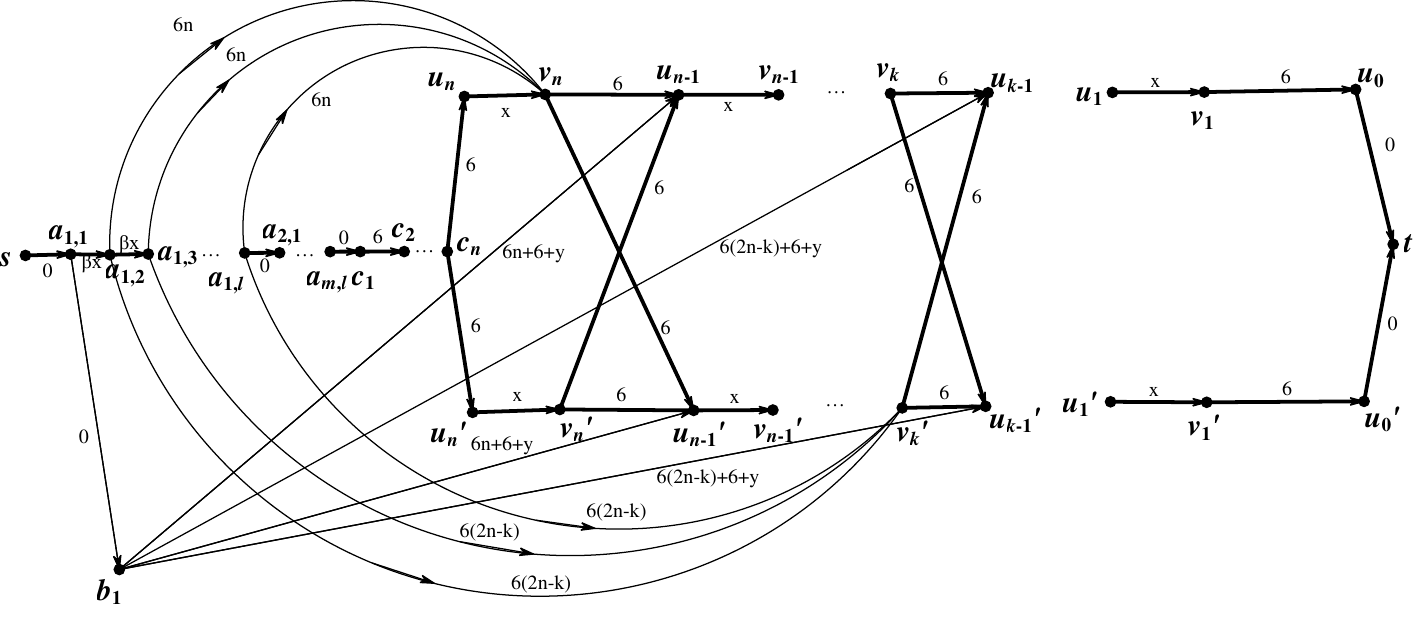}
\caption{Example figure for proving theorem \ref{thm:MTR}, with a clause with $v_n$ and $\neg v_k$. Edges that might be traveled by the agent are in bold lines. }\label{graph3:1}
\end{figure}
The feasibility of this graph is easily seen from the fact that $\beta<1$ (so $x>0$). We now prove the following main proposition:\\
\begin{proposition}
The 3-CNF is satisfiable if and only if $G$ has a minimum feasible configuration with sum of rewards at most $nx+r_t$.
\end{proposition}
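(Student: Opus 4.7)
The plan is a reduction in both directions, exhibiting the ``bus'' path as the canonical motivating route and using the clause-shortcut edges to encode the 3-CNF constraints. The variable ladder has $n$ rungs $(u_k,v_k)$ or $(u_k',v_k')$ each of weight $x$, totalling $nx$, and the remaining bus plus the discounted motivation at $t$ contribute the $r_t$ term, so the budget $nx+r_t$ is exactly sufficient for one reward of $x$ per ladder level plus a reward of $r_t$ at $t$.

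For the forward direction, given a satisfying assignment $\sigma$, I would place $r(v_k)=x$ when $\sigma(x_k)$ is true and $r(v_k')=x$ otherwise, together with $r(t)=r_t$; all other rewards are zero, giving total $nx+r_t$. I would verify node-by-node along the bus that the agent is motivated: at each ladder node the single reward per rung offsets the $x$-cost, and $r(t)$ covers the cost evaluated at earlier nodes. At each $a_{i,j}$ the potentially competing directions are the shortcut $a_{i,1}\to b_i$ and, for $j\ge 2$, the clause-shortcut $a_{i,j}\to v_k$ (or $v_k'$); using the explicit values of $g_k,h_k,l$ and the fact that $C_i$ has a true literal under $\sigma$, a direct comparison of the $c'(P)$ values shows the agent prefers (or is not worse off along) the bus.

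For the reverse direction, I fix any feasible configuration of total reward at most $nx+r_t$ and let $P$ be the path the agent actually follows. Because every $s$-$t$ path in $G$ ends by descending through some suffix of the ladder into $(u_0,t)$ or $(u_0',t)$, I would first argue that the non-trivial rewards concentrate on $t$ and on the ladder: a budget analysis shows $r(t)\ge r_t$ (otherwise the evaluated cost accumulated by the time the agent reaches the top of the ladder exceeds what the downstream rewards can offset) and, for each $k$, that $r(v_k)+r(v_k')\ge x$. Under the tight budget $nx+r_t$ this forces $r(t)=r_t$, exactly one of $r(v_k),r(v_k')$ equal to $x$ for each $k$, and no other rewards placed. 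I would then set $\sigma(x_k)$ to be true iff $r(v_k)=x$ and show this satisfies the 3-CNF: if some clause $C_i$ were left unsatisfied, then for every variable $k$ of $C_i$ the $v$-endpoint of the clause's shortcut edge from $a_{i,j}$ would carry zero reward, and a case analysis of the agent's decision at each $a_{i,j}$ (comparing continuation along the clause bus, the shortcut via $b_i$, and the clause-shortcut edges) shows the agent abandons before reaching $t$, contradicting motivation.

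The delicate point is the clause analysis in the reverse direction. The parameters $g_k,h_k,r_t$ and the length $l$ are tuned so that (a) the clause bus length $l\beta x$ is long enough to force the agent to contemplate a shortcut at some $a_{i,j}$, (b) the weight $h_k$ on each $b_i$-edge makes the $b_i$ shortcut just barely unacceptable unless a downstream reward is reachable, and (c) the $g_k$-weights on clause-shortcut edges single out precisely the correctly polarized $v_k$-endpoints as viable escape points. Proving that these three balances hold simultaneously for every clause and every $k$, with slack depending only on the standing assumption $nx>2\beta^{-1}$, is the main technical obstacle; once the parameter balance is verified, the remaining bookkeeping for both directions reduces to routine inequalities in $\beta$ and $x$.
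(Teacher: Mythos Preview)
Your forward direction matches the paper exactly: place $r(v_k)=x$ or $r(v_k')=x$ according to a satisfying assignment, set $r(t)=r_t$, and verify motivation along the bus using the parameter inequalities.

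In the reverse direction, however, your budget decomposition is not the one the paper uses and does not obviously go through. You assert that a budget analysis yields $r(t)\ge r_t$ and, for each $k$, $r(v_k)+r(v_k')\ge x$, and that tightness then pins down the configuration. Neither claim is established in the paper, and the per-rung inequality is suspect: $u_k$ has a single out-edge to $v_k$ (not to $v_k'$), so $r(v_k')$ does nothing for the agent at $u_k$; moreover nothing a priori prevents reward from sitting on $c_i$, $u_k$, or $t$ rather than on $v_k$. The paper's route is different and cleaner: it first shows, using inequalities (\ref{eqn1}) and (\ref{eqn2}), that under the budget $nx+r_t$ the agent can neither traverse any $b_i$-edge nor any clause-shortcut $(a_{i,j},v_k)$, so the agent necessarily reaches $c_1$. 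At $c_1$ the evaluated cost of every $c_1$--$t$ path is exactly $r_t+nx$ (equation (\ref{eqn8})), which forces \emph{all} of the reward onto the ladder nodes $\{c_i,u_k,u_k',v_k,v_k',t\}$ and hence none onto any $a_{i,j}$ or $b_i$; it also forces the reward to lie on a single $c_1$--$t$ path, so at most one of $v_k,v_k'$ carries reward for each $k$. The paper never needs the exact distribution you claim.

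Your clause-contradiction step is also slightly off. You say the agent ``abandons'' at some $a_{i,j}$; the paper's argument is specifically at $a_{i,1}$ and is a \emph{diversion}, not an abandonment: when none of the three $v_{k_j}^*$ carries reward, the comparison $d(b_i)<x+S$ (via inequality (\ref{eqn4})) shows the agent prefers the $b_i$ edge over continuing to $a_{i,2}$, which contradicts the earlier fact that the agent cannot traverse $b_i$ within budget. The $b_i$ node and the weight $h_k$ exist precisely to create this trap; your sketch treats $b_i$ only as something to be ruled out, not as the lever that forces satisfiability.
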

\subsection{Two Lemmas}
To start with the proof, we first give two lemmas. The first lemma is about constraints in our graph:
\begin{lemma}
 We have
\begin{eqnarray}
h_k\beta^{-1}+6(k-1)&>&nx+r_t,\label{eqn1}\\
g_k\beta^{-1}+6k&>&nx+r_t,\label{eqn2}\\
(l-1)\beta x&>&r_t+nx,\label{eqn3}\\
h_k&<&x+g_k+6,\label{eqn4}\\
x+g_k-x+6k&<&r_t,\label{eqn5}\\
x+g_k-x+6k&<&h_k+6(k-1),\label{eqn6}\\
\beta x+g_k-x+6k&<&r_t,\label{eqn7}\\
6\beta^{-1}+6(n-1)+6n&=&r_t,\label{eqn8}\\
x+g_k-x+6k&<&g_k\beta^{-1}-x+6k.\label{eqn9}
\end{eqnarray}
\label{lem:eqns}
\end{lemma}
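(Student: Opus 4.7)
The lemma bundles nine algebraic relations among the constants $g_k, h_k, r_t, x, y, l$. My strategy is direct algebraic verification, exploiting three simplifications that recur throughout: the substitution $\beta^{-1} = 1 + x$ (from the definition $x = \beta^{-1}-1$), the identity $g_k + 6k = 12n$ (from $g_k = 6(2n-k)$), and the strict positivity $y = x/2 > 0$ and $x > 0$ that follows from $\beta < 1$.

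I first dispose of the routine identities and one-step inequalities. Equation (8) reduces to $6(1+x) + 12n - 6 = 12n - 6 + 6\beta^{-1}$, which is immediate. Inequality (4) follows from $h_k - g_k - 6 = y = x/2 < x$, and (6) reduces to $h_k > g_k + 6$, i.e.\ $y > 0$. For (5) and (7), substituting $g_k + 6k = 12n$ and $r_t = 12n + 6x$ yields $12n < 12n + 6x$ and $12n + (\beta-1)x < 12n + 6x$, both trivial. Inequality (9) reduces to $g_k(\beta^{-1}-1) > x$, i.e.\ $g_k \cdot x > x$, which holds because $g_k = 6(2n-k) \geq 6$ for $k \leq n$.

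Inequality (3) is an immediate consequence of the definition of $l$. Writing $X = (nx + r_t)/(\beta x)$, we have $l - 1 = \lceil X \rceil \geq X$, so $(l-1)\beta x \geq nx + r_t$; the extra $+1$ built into the definition of $l$ gives the strict bound in the boundary case when $X \in \mathbb{Z}$, so no subtlety arises.

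The only cases with nontrivial bookkeeping are (1) and (2), both of which involve $\beta^{-1}$-scaled expressions. For (2), expanding via $\beta^{-1} = 1 + x$ gives $g_k \beta^{-1} + 6k = 12n + g_k x$ and $nx + r_t = 12n + (n+6)x$, so the claim reduces to $g_k > n + 6$; since $g_k \geq 6n$, this holds for $n \geq 2$, and the running hypothesis $nx > 2\beta^{-1}$ in any case forces $n$ large. For (1), the same expansion applied to $h_k = g_k + 6 + y$ produces
\[
\text{LHS} - \text{RHS} \;=\; y + (11n - 6k + y)\,x,
\]
which is positive because $11n - 6k \geq 5n > 0$ for $k \leq n$, together with $x,y > 0$. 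The main obstacle throughout is purely bookkeeping (carefully expanding $h_k\beta^{-1}$ without collecting terms incorrectly); no deeper mathematical content appears, and a single clean application of $\beta^{-1} = 1+x$ handles every case.
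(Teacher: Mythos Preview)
Your proof is correct and follows the same direct algebraic verification as the paper; the only organizational difference is that you establish (\ref{eqn1}) independently via the expansion $\beta^{-1}=1+x$, whereas the paper first shows the left side of (\ref{eqn1}) dominates that of (\ref{eqn2}) and then proves (\ref{eqn2}). One small slip: in (\ref{eqn3}) your claim that ``the extra $+1$ gives the strict bound when $X\in\mathbb{Z}$'' is not right---with $l-1=\lceil X\rceil$ you obtain only $(l-1)\beta x \geq nx+r_t$, and equality can occur---but the paper's own proof also establishes only the weak inequality here, so you are not worse off than the original.
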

Each of the inequalities (equality) are necessary in guiding the agent to $t$. For example, (\ref{eqn1}) makes sure that the edge of $(b_i,u_{k-1})$ will not be used by the agent if the sum of rewards is at most $nx+r_t$. Details will be explained later in the main proof.
\begin{proof}
We have $h_k\beta^{-1}+6(k-1)=\beta^{-1}(6(2n-k)+6+y)+6(k-1)>\beta^{-1}6(2n-k)+6k=g_k\beta^{-1}+6k$.
%\begin{eqnarray*}
%h_k\beta^{-1}+6(k-1)&=&\beta^{-1}(6(2n-k)+6+y)+6(k-1)\\
%&>&\beta^{-1}6(2n-k)+6k\\
%&=&g_k\beta^{-1}+6k.
%\end{eqnarray*}
On the other hand,
\begin{eqnarray*}
g_k\beta^{-1}+6k&=&\beta^{-1}6(2n-k)+6k\\
&\geq& 6n+6n\beta^{-1}\\
&=&12n+6nx\\
&>&nx+12n+6\beta^{-1}-6=nx+r_t.
\end{eqnarray*}
The last inequality is because the way we choose $n$: $nx>2\beta^{-1}$.
So we prove (\ref{eqn1}) and (\ref{eqn2}). \\
To prove (\ref{eqn3}), we choose an $l$ large enough: thus we have
\[r_t+nx=12n-6+6\beta^{-1}+nx\leq \beta x(l-1).\]

(\ref{eqn4}) follows directly from definition.

For (\ref{eqn5}) the left hand side is \(g_k+6k=6(2n-k)+6k=12n<r_t.\)

For (\ref{eqn6}) the right hand side is \(h_k+6(k-1)=12n+y>12n\) which is the left hand side.

(\ref{eqn7}) is a direct induction of (\ref{eqn5}), and (\ref{eqn8}) is the definition of $r_t$.

For (\ref{eqn9}), it is sufficient to prove $x<g_k(\beta^{-1}-1)$. In fact we have $g_k(\beta^{-1}-1)=g_kx=6(2n-k)x>x$.

So we prove all the inequalities and equalities.
\end{proof}
\indent \indent The second lemma is a simple equivalent condition for the agent to make next move; we use this condition to simplify our proof.
\begin{lemma}
The agent continues to move if and only if there exists a path $P$ such that
\[\beta^{-1}c(P)=\beta^{-1}c(e_1)+\sum_{e\in P:e\neq e_1} c(e) \leq \sum_{v\in P} r(v),\]
where $e_1$ is the first edge of $P$.
\label{lemmaeasy}
\end{lemma}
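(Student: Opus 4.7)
The plan is to prove Lemma~\ref{lemmaeasy} by a single algebraic rearrangement of the definition of $c'(P)$, followed by division by the positive constant $\beta$. By definition, the agent continues from $u$ if and only if some path $P$ from $u$ to $t$ satisfies $c'(P)\le 0$, so it suffices to show that the defining inequality $c'(P)\le 0$ is equivalent to the inequality displayed in the lemma.

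First, I would split the summand $c(v,v')-r(v)$ in the defining expression of $c'(P)$ to obtain
\[c'(P)=c(u,v_0)+\beta\sum_{v\in P,\,v\ne u,t}c(v,v')\;-\;\beta\sum_{v\in P,\,v\ne u,t}r(v).\]
Next, I would re-index the edge-cost sum. Since $P$ is a simple directed path from $u$ to $t$, the map $v\mapsto (v,v')$ is a bijection from $P\setminus\{u,t\}$ onto the edges of $P$ other than $e_1=(u,v_0)$; every non-terminal node on $P$ contributes its unique outgoing edge on $P$, and $e_1$ is precisely the one we omit by excluding $v=u$. Hence
\[c'(P)=c(e_1)+\beta\sum_{e\in P,\,e\ne e_1}c(e)\;-\;\beta\sum_{v\in P,\,v\ne u,t}r(v).\]
Dividing the inequality $c'(P)\le 0$ by $\beta>0$ then yields $\beta^{-1}c(e_1)+\sum_{e\ne e_1}c(e)\le \sum_{v\in P,\,v\ne u,t}r(v)$, which is exactly the inequality stated in the lemma modulo the convention that $r(u)$ and $r(t)$ are immaterial to the continuation decision at $u$ (the reward at the current node has either already been received or is a fixed additive constant that does not affect the sign comparison, and the goal reward at $t$ is accounted for separately as the abandonment threshold rather than as a placed intermediate reward), so that $\sum_{v\in P,\,v\ne u,t}r(v)$ may be written as $\sum_{v\in P}r(v)$.

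The step requiring the most care is the re-indexing from the sum over interior nodes of $P$ to the sum over edges of $P$ other than $e_1$; this is the one point where the structural fact that $P$ is a simple path is used. Once that bijection is noted, the remainder is a one-line manipulation, so there is no substantive obstacle beyond bookkeeping with the $u,t$ boundary terms.
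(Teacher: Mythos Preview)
Your approach is essentially the same as the paper's, which simply says the lemma ``can be obtained simply by multiplying $\beta^{-1}$ in the definition formula.'' You have merely spelled out the re-indexing and the division step that the paper leaves implicit, so there is nothing substantively different to compare.
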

\begin{proof}
This lemma can be obtained simply by multiplying $\beta^{-1}$ in the definition formula at the beginning of this section.
\end{proof}

 By Lemma \ref{lemmaeasy}, the agent evaluates
\[\beta^{-1}c(e_1)+\sum_{e\in P:e\neq e_1} c(e)\]
instead of $c(P)$ for every path $P$ at each node, and choose the edge which minimizes this value.

\subsection{Main proof of theorem \ref{thm:MTR}}
\begin{proof}
We first prove the achievability of the bound $nx+r_t$ when the clause is satisfiable.
\begin{proposition}
If the 3-CNF is satisfiable, then $G$ has a minimum feasible configuration with sum of rewards at most $nx+r_t$.
\label{prop:ach}
\end{proposition}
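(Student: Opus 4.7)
My plan is to construct a reward configuration directly from the satisfying assignment: for each variable $x_k$, place reward $x$ on $v_k$ if $x_k$ is true and on $v_k'$ otherwise, place reward $r_t$ on $t$, and set every other reward to $0$. The total is then exactly $nx+r_t$, matching the bound, so it suffices to verify that the agent actually reaches $t$. I would do this by tracing the agent's path step by step, evaluating at each visited node the minimum of $c'(P)$ over continuation paths via Lemma \ref{lemmaeasy}.

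The intended path follows the bus $s\to a_{1,1}\to\cdots\to a_{m,l}\to c_1\to\cdots\to c_n$, then enters the variable ladder at $u_n$ or $u_n'$ and alternates down, at each $v$-step selecting the $u_{j-1}$ or $u_{j-1}'$ side that leads to the rewarded $v_{j-1}$ or $v_{j-1}'$ (exactly one is rewarded by construction). The critical check is at each $a_{i,1}$, where two candidate plans compete: (I) detour through $b_i$, yielding $c'=\beta[h_k+6(k-1)-r_t]=-\frac{11}{2}\beta x$ uniformly in $k$ because $h_k+6(k-1)=12n+y$; and (II) advance to $a_{i,2}$ and detour through a rewarded $v_k$ or $v_k'$ of clause $i$, yielding $c'=-6\beta x$ by the identity $g_k+6k=12n$. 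The satisfying assignment guarantees a rewarded literal in clause $i$, and $-6\beta x<-\frac{11}{2}\beta x$ makes plan (II) win, so the agent steps to $a_{i,2}$. At $a_{i,j}$ for $2\leq j<l$, the direct detour edge has $c'=\beta x(g_k-7)>0$ by inequality (\ref{eqn2}), but postponing one step again yields $c'=-6\beta x$, so the agent advances to $a_{i,j+1}$. At $a_{i,l}$ with $i<m$ the agent plans a detour at $a_{i+1,2}$ through clause $(i+1)$'s rewarded literal (giving $c'=\beta x(\beta-7)$), and at $a_{m,l}$ it plans the straight continuation via $c_1$ (giving $c'=-6\beta x$). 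On the final ladder one obtains $c'=6\beta(1-i)$ at $c_i$, $c'=6\beta(j-2n)$ at $v_j$, and $c'=\beta x^2+6\beta(1-n)-6$ at $u_n$, all $\leq 0$, with the last exploiting the sizing $nx>2\beta^{-1}$.

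The main obstacle is the $a_{i,1}$ comparison, where satisfiability enters essentially. Were clause $i$ unsatisfied, the best $a_{i,2}$-plan would collapse to an unrewarded detour with $c'=-5\beta x$, the $b_i$-plan would win at $-\frac{11}{2}\beta x$, and by inequality (\ref{eqn1}) no continuation out of $b_i$ has $c'\leq 0$, so the agent would be stranded there; this is exactly why the assignment must satisfy every clause visited along the bus. The remaining verifications are routine algebra driven by the telescoping identities $1-\beta=\beta x$, $g_k+6k=12n$, $h_k+6(k-1)=12n+y$, and $12n-r_t=-6x$, which is why the same $-6\beta x$ constant recurs throughout the bus portion.
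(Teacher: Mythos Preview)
Your construction and verification are exactly the paper's: reward $x$ on $v_k$ or $v_k'$ according to the satisfying assignment, $r_t$ on $t$, then trace the agent step by step along $s\to a_{1,1}\to\cdots\to t$ using the inequalities of Lemma~\ref{lem:eqns}; your explicit constants $-6\beta x$, $-\tfrac{11}{2}\beta x$, $\beta x(g_k-7)$, etc.\ simply spell out what the paper defers to its appendix. One small overclaim: at $u_n$ the condition $c'\le 0$ reduces to $6n\ge x^2-6x$, which does \emph{not} follow from $nx>2\beta^{-1}$ alone when $\beta$ is small (e.g.\ $\beta=0.1$, $n=3$ fails), so a stronger padding of $n$ is needed---the paper's ``trivial by computation'' glosses over the same point.
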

\begin{proof}
If the 3-CNF is satisfiable, we construct a reward configuration as follows: Pick one valid assignment and let $r(v_i)=x$ if variable $i$ is true in the assignment, otherwise let $r(v_i')=x$. Also let $r(t)=r_t$. There's no reward on all the other nodes. We now prove that this is a valid setting: Define the length value of a path $P$ from $v$ to $v'$, denoted $l(P)$, to be the sum of all weights on the path minus sum of all rewards on the path except $v$ and $v'$:
\[l(P)=\sum_{e\in P}c(e)-\sum_{u\in P:u\ne v,v'} r(u).\]
The distance of a node $v$ to $t$, denoted $d(v)$, is defined as the minimum length value of all paths getting from $v$ to $t$. In fact, it's easy to see that in this setting all $v_i,v_i'$ have $d(v_i)=d(v_i')=6+(k-1)(6+x-x)=6k$, and if variable $i$ is true then $d(u_i)=6k$, otherwise $d(u_i')=6k$. By easy computation using lemma \ref{lem:eqns}, one can verify that the agent makes all the way to $t$ under the construction, thus finishing the proof; the details of this verification is provided in the appendix.
\end{proof}

We then prove that the satisfiability of the clause when $nx+r_t$ is achieved.
\begin{proposition}
If $G$ has a minimum feasible configuration with sum of rewards at most $nx+r_t$, then the 3-CNF is satisfiable,
\end{proposition}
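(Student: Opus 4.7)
The plan is to show that every reward configuration whose total mass does not exceed $nx+r_t$ is rigidly forced to match the configuration produced by Proposition~\ref{prop:ach}: namely, $r(t)=r_t$, exactly one of $\{r(v_i), r(v_i')\}$ equals $x$ for each variable $i$ (the other being $0$), and all remaining nodes receive reward $0$. From such a rigid configuration we can read off a candidate truth assignment, and then deduce that every clause is satisfied purely from the fact that the agent actually reaches $t$.

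The first step is a chain of lower-bound arguments using Lemma~\ref{lemmaeasy}. At each node $u_k$ (and symmetrically $u_k'$) the unique outgoing edge has weight $x$, so propagating the continuation inequality backward from $t$ along the $(u,v)$-chain shows that $r(v_k)+r(v_k')\geq x$ must hold for every $1\leq k\leq n$. Separately, considering the agent's evaluation at $c_n$ (or $u_n$) and applying Lemma~\ref{lem:eqns}, in particular equation (\ref{eqn8}) which pins down exactly the running cost $6n+6(n-1)+6\beta^{-1}$, forces $r(t)\geq r_t$. Summing yields $\sum_k(r(v_k)+r(v_k'))+r(t)\geq nx+r_t$, and because this already saturates the budget, every lower bound must hold with equality and all other nodes ($a_{i,j}$, $b_i$, $c_i$, $u_i$, $u_i'$) must carry zero reward.

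The second step is polarization: show that $r(v_k)$ and $r(v_k')$ cannot both be strictly positive, so that one equals $x$ and the other equals $0$. The argument proceeds by contradiction. If both carry partial reward, then inspecting the agent's evaluation at $a_{i,j}$ adjacent to the corresponding variable and invoking inequalities (\ref{eqn4})-(\ref{eqn6}) and (\ref{eqn9}) of Lemma~\ref{lem:eqns}, the agent would be diverted into a path that cannot reach $t$ on the residual budget, contradicting feasibility. Having polarized every variable, set $x_i=\textrm{true}$ iff $r(v_i)=x$ and $x_i=\textrm{false}$ iff $r(v_i')=x$; this gives a well-defined assignment.

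The last step is to verify that this assignment satisfies every clause. Consider the agent's evaluation on the $a$-chain segment corresponding to clause $i$. Inequality (\ref{eqn3}) says that the cumulative $\beta x$ weights along this long segment are too large to be discounted away by $r_t$ alone, so the bus by itself cannot motivate the agent through the segment. Hence some shortcut edge $(a_{i,j},v_k)$ or $(a_{i,j},v_k')$ with $j\geq2$ must supply the extra motivation; and by (\ref{eqn7}) the shortcut is only attractive when the targeted $v_k$ or $v_k'$ carries reward $x$. By the construction of the graph, the sign of the literal appearing in clause $i$ dictates which of $v_k, v_k'$ is the shortcut target, so the existence of a rewarded shortcut for clause $i$ means that clause $i$ contains a literal that is true under our assignment. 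Since this holds for every clause, the 3-CNF is satisfied.

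I expect the main obstacle to be the polarization step: precluding split rewards between $v_k$ and $v_k'$ requires a careful simultaneous case analysis at the candidate deviation nodes $a_{i,j}$ and $b_i$, chaining together several inequalities of Lemma~\ref{lem:eqns} without over-counting the motivation contributions propagated through the $(u,v)$-chain. The lower-bound and clause-satisfaction steps, by contrast, should follow rather directly from the inequalities already collected in Lemma~\ref{lem:eqns} together with Lemma~\ref{lemmaeasy}.
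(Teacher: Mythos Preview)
Your plan diverges from the paper's proof in a way that creates a genuine gap in Step~1 and a missing ingredient in Step~3.

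\textbf{Step 1 does not work as stated.} Looking locally at $u_k$ (or $u_k'$) cannot yield $r(v_k)+r(v_k')\geq x$: the continuation inequality at $u_k$ only says that the rewards \emph{along the entire path from $u_k$ to $t$} suffice to cover $x\beta^{-1}$ plus the remaining edge costs. Nothing prevents those rewards from sitting entirely on $t$, or on $v_1$ alone, etc. Likewise, the evaluation at $c_n$ does not isolate $r(t)\geq r_t$; it bounds only the total reward on some $c_n$-$t$ path. So the per-variable decomposition you want never gets off the ground. The paper avoids this entirely: it first argues (via (\ref{eqn1}) and (\ref{eqn2})) that under the budget the agent cannot use any $b_i$-edge or any direct $(a_{i,j},v_k^*)$-edge, hence must reach $c_1$; then at $c_1$ every $c_1$-$t$ path evaluates to exactly $r_t+nx$ by (\ref{eqn8}), so the rewards on \emph{one} such path already exhaust the budget. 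Polarization is then immediate (no path contains both $v_k$ and $v_k'$), making your Step~2 unnecessary.

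\textbf{Step 3 misses the role of $b_i$.} The forcing mechanism for clause satisfaction is not ``the bus is too long, so a rewarded shortcut must exist.'' The shortcut $(a_{i,2},v_k^*)$ is present in the graph regardless of rewards, and already beats the bus by (\ref{eqn3}). The actual argument is a comparison at $a_{i,1}$: if none of the three $v_{k_j}^*$ carries reward, then by (\ref{eqn4}) the path through $b_i$ is strictly cheaper than the path through $a_{i,2}$, so the agent deviates to $b_i$ --- contradicting what was established earlier (that no $b_i$-edge can be used within the budget). Your outline never invokes (\ref{eqn4}) or the $b_i$ gadget, and (\ref{eqn7}) alone does not give the needed contradiction.
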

\begin{proof}
First notice that in the minimum reward configuration the agent cannot traverse through any $b_i$ to $u_k$, since otherwise it requires at least
$\beta^{-1}h_k+6(k-1)>r_t+nx$ (by (\ref{eqn1})) reward.

On the other hand, in a minimum setting the edge $(a_{i,j},v_k)$ also cannot be used, since otherwise it requires at least $\beta^{-1}g_k+6k>r_t+nx$ (by (\ref{eqn2})) reward.

Combine the two facts above, we know that the agent will reach $c_1$. Notice that all path from $c_1$ to $t$ has the form $c_1c_2\cdots c_n u_n^*v_n^*u_{n-1}^*\cdots v_1^*t$, where $u_i^*\in\{u_i,u_i'\},v_i^*\in\{v_i,v_i'\}$. On $c_1$, the agent evaluates the cost of any of these paths (excluding the rewards) to be $6\beta^{-1}+6(n-1)+6n=r_t+nx$ (by (\ref{eqn8})), so at least $r_t+nx$ rewards has to be put on nodes $\{c_1,c_2,...,c_n,u_n,\cdots,u_1,u_n',\cdots,u_1',v_n,\cdots,v_1,v_n',\cdots,v_1',t\}$, in all three versions of \mtr$_\beta$. And so if $G$ has a minimum feasible configuration with sum of rewards at most $nx+r_t$, no reward is put on nodes $a_{i,j}$ or $b_i$. Also in \mtr$_\beta$ \Rmnum{3}, no negative rewards can be given. consider the path that the agent chooses the shortest on $c_1$, we know that there must be a path from $c_1$ to $t$ with a total reward (including reward on $t$) of $r_t+nx$. Then since there's no path from $v_k$ to $v_k'$, $u_k$ to $u_k'$ (vice versa), we can't put rewards on both $v_k$ and $v_k'$ or $u_k$ and $u_k'$ (the agent can't get both of them).

Consider the following assignment: if there's reward on node $v_k$, let variable $v_k$ = true; otherwise let variable $v_k$ = false. In the following, we prove this assignment gives a valid assignment of the original 3-SAT clause.

Note that if there's reward on $v_k$, then $v_k'$ doesn't have; vice versa. So to prove this proposition we only need to prove that for every clause $i$, among the three $v$ kind nodes (call them $v_{k_1}^*,v_{k_2}^*,v_{k_3}^*$, $v_{k_i}^*\in\{v_{k_i},v_{k_i}'\}$) that $a_{i,2}$ connects to, there's at least one with reward. To prove this, first notice that the shortest path from $a_{i,2}$ to $t$ must be through some $v_{k_j}^*$: The path through $a_{i,j}(j>2)$ and then to $v_{k_j}^*$ has same costs but an additional $\beta x$ cost; and the path from $a_{i+1,1}$ has cost at least $\beta x(l-1)>nx+r_t$ (by (\ref{eqn3}) and that there's no reward on $a_{i,j}$), which is larger than the total sum of rewards.\\

If there's no reward on $v_{k_1}^*,v_{k_2}^*$ or $v_{k_3}^*$, we have
\[d(a_{i,2})=\min_{1\leq j\leq 3, v_{k_j}^*} g_{k_j}+d(v_{k_j}^*)=\min_{1\leq j\leq 3} \min\{g_{k_j}+6+d(u_{k_j-1}),g_{k_j}+6+d(u'_{k_j-1})\}:=S.\]
So at $a_{i,1}$ the edge to $a_{i,2}$ is evaluated as $\beta x\cdot \beta^{-1}+S=x+S$, since there's no reward on $a_{i,1}$. On the other hand, the path to $b_i$ is evaluated as (note that there's no reward on $b_i$)
\begin{eqnarray*}
0+d({b_i})&=&\min_{1\leq j\leq 3} \min\{h_{k_j}+d(u_{k_j-1}),h_{k_j}+d(u'_{k_j-1})\}\\
&<&\min_{1\leq j\leq 3} \min\{g_{k_j}+x+6+d(u_{k_j-1}),g_{k_j}+x+6+d(u'_{k_j-1})\}(\text{by} (\ref{eqn4}))\\
&=&x+S.
\end{eqnarray*}
So the agent will go to $b_i$, which makes contradiction with above. So there's reward on at least one of $v_{k_1},v_{k_2},v_{k_3}$. Since it's impossible that we put reward on both $v_k$ and $v_k'$, we prove the proposition.
\end{proof}
\indent \indent Combine the two sides above, we prove Theorem \ref{thm:MTR}.
\end{proof}

\section{Conclusion and Further work}
In this paper, we solve three open problems raised by \cite{kleinberg2014time}. Our results show the close relationship between the special structure $\mathcal{F}_k$ and exponential cost ratio, and thus we should avoid such structure when designing structure of tasks. We also prove the hardness of motivating agents - either by deleting edges, or putting rewards. These problems seems to be intractable and we need a careful search when trying to motivate agents.

%So far we have extended the work of \cite{kleinberg2014time}. Firstly, we have shown the close relationship between the special structure $\mathcal{F}_k$ and exponential cost ratio, thus we should avoid this structure when we design the way of work if we do not want to waste due to time-inconsistency. Also, we have shown the hardness of deleting edges or distributing the reward to motivate the agent. This may be a bad properties of the motivation model.\\

As an extension of our work, we can consider other forms of the problem, for example, for \mtr$_\beta$, consider the following extension:

%\indent \mtr$_\beta$ \Rmnum{4}: one can put rewards on every edge and node of the graph. In this situation, the agent gets the reward directly when he travels through it, instead of multiplying it by $\beta$;\\
\mtr$_\beta$ \Rmnum{4}: The designer can put any reward on any node, but only pays what the agent actually claims. As a result, we only want to minimize the claimed total rewards.

Problem \mtr$_\beta$ \Rmnum{4} seems to be NP-complete, but we are not able to show it using our method.

We are also interested in the case where multi-agents collaboratively reach the target node. Similar question asked in this paper can be asked in this extension as well.

\section{Acknowledgement}
This work was supported in part by the National Basic Research Program of China Grant 2011CBA00300, 2011CBA00301, the National Natural Science Foundation of China Grant 61033001, 61361136003, 61303077, and a Tsinghua University Initiative Scientific Research Grant.

% Bibliography
\bibliographystyle{acmsmall}
\bibliography{acmsmall-sample-bibfile}
                             % Sample .bib file with references that match those in
                             % the 'Specifications Document (V1.5)' as well containing
                             % 'legacy' bibs and bibs with 'alternate codings'.
                             % Gerry Murray - March 2012

% History dates

% Electronic Appendix
\newpage
\appendix
\section{Proof to lemmas in section 3}
In the following proofs, we only require the property of $t_i$ that $t_i\geq i+1$, which has been derived in the main proof. Also note that $S_i=\left\{j|1\leq j<i,t_j\geq i\right\}$.

\subsection{Proof of Lemma \ref{xsk:mathematical lemma}}
In addition, we prove
\begin{eqnarray*}
x_{m-1}+\sum_{j\in S_{m-1}}x_j=\sum_{j:t_j=m-1}x_j+\sum_{j:m-1<t_j<m}x_j+\sum_{j\in S_m}x_j,
\end{eqnarray*}
which will be used later.
\begin{proof}
For $m=2,3,\ldots,n$, note that $t_{m-1}\geq m$, we have $m-1\in S_m$ and $m-1\notin S_{m-1}$. Therefore,
\begin{eqnarray*}
S_{m-1}\cup \{m-1\}&=&\{m-1\}\cup\{j|1\leq j<m-1,t_j\geq m-1\}\\
&=&\{m-1\}\cup\{j|1\leq j<m-1,t_j\geq m\}{\cup}\: \{j|m-1\leq t_j<m\}\\
&=&S_m\cup \{j|t_j= m-1\}{\cup}\:\{j|m-1<t_j<m\},
\end{eqnarray*}
%where all sets appear in the equality are disjoint. Thus
where sets on each side of the equality are disjoint. Thus,
\begin{eqnarray*}
x_{m-1}+\sum_{j\in S_{m-1}}x_j=\sum_{j:t_j=m-1}x_j+\sum_{j:m-1<t_j<m}x_j+\sum_{j\in S_m}x_j.
\end{eqnarray*}
Also note that
\begin{eqnarray*}
\sum_{j:t_j=m-1} x_j+\sum_{j:m-1<t_j<m}x_j=\sum_{j:m-1\leq t_j<m}x_j=\sum_{j\in S_{m-1}:m-1\leq t_j<m}x_j,
\end{eqnarray*}
we have
\begin{eqnarray*}
\sum_{j\in S_m} x_j=x_{m-1}+\sum_{j\in S_{m-1}} x_j-\sum_{j\in S_{m-1}:m-1\leq t_j<m}x_j=x_{m-1}+\sum_{j\in S_{m-1}:t_j\geq m} x_j.
\end{eqnarray*}
\end{proof}

\subsection{Proof of Lemma \ref{xsk:bound lemma}}
\begin{proof}
For convenience, we restate (\ref{xsk:relation between a and b}) here again:
\begin{eqnarray*}
a_1=1,a_i=\beta b_{i-1}+\sum_{j:i-1<t_j<i}(1-\beta)b_j,b_i=a_i+\sum_{j:t_j=i} (1-\beta)b_j.
\end{eqnarray*}
Note $b_i\geq a_i$ can be directly derived from the recursion formula, we only need to prove $a_i\geq \beta^{|S_i|}$. Now we suppose that $i$ is fixed.

We prove firstly that for $m=1,2,\ldots,n$,
\[a_m+\sum_{j\in S_m} (1-\beta)b_j=1.\]
This can be proved by induction on $m$: the base case of $m=1$ holds trivially.

For inductive case: if it holds for $m-1$,
\begin{eqnarray}
&&a_m+\sum_{j\in S_m} (1-\beta)b_j \nonumber\\
&=&b_{m-1}-(1-\beta)b_{m-1}+\sum_{j:m-1<t_j<m}(1-\beta)b_j+\sum_{j\in S_m} (1-\beta)b_j \nonumber\\
&=&a_{m-1}+\sum_{j:t_j=m-1} (1-\beta)b_j-(1-\beta)b_{m-1}+\sum_{j:m-1<t_j<m}(1-\beta)b_j+\sum_{j\in S_m} (1-\beta)b_j \nonumber\\
&=&a_{m-1}+\sum_{j\in S_{m-1}} (1-\beta)b_j \label{xsk:use of Lemma 1 in Lemma 2}\\
&=&1, \nonumber
\end{eqnarray}
where equality (\ref{xsk:use of Lemma 1 in Lemma 2}) follows from Lemma \ref{xsk:mathematical lemma}.

Let $S_i=\left\{j_1,j_2,\ldots, j_{|S_i|}\right\}$ where $j_1<j_2<\cdots<j_{|S_i|}$. Then for $l=1,2,\ldots,|S_i|$ we have $t_{j_l}\geq i>j_l$, thus for all $1\leq p\leq l\leq |S_i|$, we have $t_{j_p}\geq i\geq j_l+1>j_p$, thus $j_p\in S_{j_l+1}$. Now we can obtain for $l=1,2,\ldots,|S_i|$,
\begin{eqnarray*}
b_{j_l}+\sum_{p=1}^{l-1}(1-\beta)b_{j_p}=\beta b_{j_l}+\sum_{p=1}^{l}(1-\beta)b_{j_p}\leq a_{j_l+1}+\sum_{j\in S_{j_l+1}}(1-\beta)b_j=1.
\end{eqnarray*}
Note $a_i+\sum_{p=1}^{|S_i|}(1-\beta)b_{j_p}=1$, to prove $a_i\geq \beta^{|S_i|}$, we only need to prove $\sum_{p=1}^{|S_i|}(1-\beta)b_{j_p}\leq 1-\beta^{|S_i|}$, or more strongly $\sum_{p=1}^{l}(1-\beta)b_{j_p}\leq 1-\beta^{l}$ for $l=1,2,\ldots,|S_i|$.

We prove this by induction on $l$: again, the base case of $l=1$ holds trivially.

For inductive case: if it holds for $l-1$,
\begin{eqnarray*}
\sum_{p=1}^{l}(1-\beta)b_{j_p}&=&\sum_{p=1}^{l-1}(1-\beta)b_{j_p}+(1-\beta)b_{j_l}\\
&\leq&\sum_{p=1}^{l-1}(1-\beta)b_{j_p}+(1-\beta)\left(1-\sum_{p=1}^{l-1}(1-\beta)b_{j_p}\right)\\
&=&\beta\sum_{p=1}^{l-1}(1-\beta)b_{j_p}+1-\beta\\
&\leq&\beta\left(1-\beta^{l-1}\right)+1-\beta\\
&=&1-\beta^l.
\end{eqnarray*}
\end{proof}

\subsection{Proof of Lemma \ref{xsk:graph lemma}}
\begin{proof}

\begin{figure}[h]
\centerline{\includegraphics[scale=0.6]{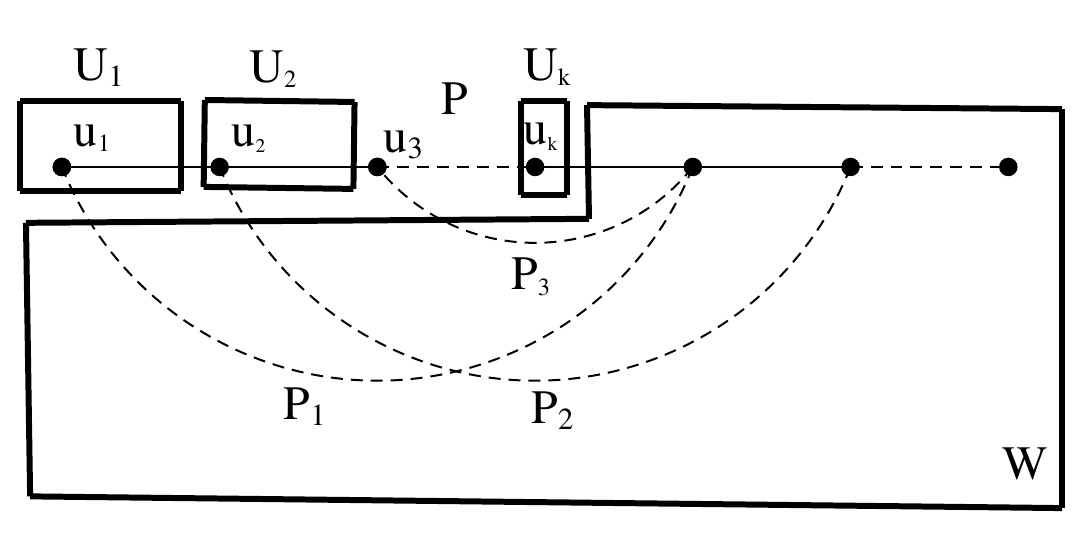}}
\caption{the graph contains an $\mathcal{F}_k$-minor.}
\label{fkminor}
\end{figure}
%Let $v_1,v_2,\ldots,v_k,w'$ name the nodes of an $\mathcal{F}_k$ such that there are exactly edges $(v_i,v_{i+1})$ for $i=1,2,\ldots,k-1$ and edges $(v_i,w')$ for $i=1,2,\ldots,k$.
Let $U_1,U_2,\ldots,U_k,W$ be subgraphs of $\sigma(G)$ such that there is an edge between two nodes in one subgraph if and only if this edge exists in $\sigma(G)$. The nodes of these subgraphs are defined as follows: for $i=1,2,\ldots,k-1$, nodes of $U_i$ are nodes between $u_i$ and $u_{i+1}$ in $P$ (including $u_i$ and excluding $u_{i+1}$); $U_k$ contains only $u_k$; and nodes of $W$ are the union of $k$ parts $V_1,V_2,\ldots,V_{k-1},W'$ where $V_i$ are nodes between $u_i$ and $u_i'$ in $P_i$ (including $u_i'$ and excluding $u_i$) for $i=1,2,\ldots,k-1$ and $W'$ are all nodes after $u_k$ in $P$ (excluding $u_k$). See Figure \ref{fkminor}.

Clearly, $U_1,U_2,\ldots,U_k,W$ are disjoint while each subgraph is connected. For $i=1,2,\ldots,k-1$, $U_i$ and $U_{i+1}$ are connected by the edge immediately before $u_{i+1}$ in $P$, and $U_i$ and $W$ are connected by the edge immediately after $u_i$ in $P_i$. $U_k$ and $W$ are connected by the edge immediately after $u_k$ in $P$. Thus $\{U_1,U_2,\ldots,U_k,W\}$ can be considered as an $\mathcal{F}_k$-minor, which finishes the proof.
\end{proof}

\section{Verification in proposition \ref{prop:ach}}
We say a node $p$ will \emph{lead to} another node $q$, if at node $p$ the agent will choose the path from $p$ to $q$. We thus notice the following statements:
%We also have $d(a_{i,j})\leq 12n+1.6$ for $j\geq 2$, because the path $a_{i,j}\rightarrow v_k$ then through shortest path of $v_k$ have cost $1.8+6(2n-k)-0.2+6k=12n+1.6$.\\
\begin{enumerate}
\item
node $s$ will lead to $a_{1,1}$ because (\ref{eqn7}) ($
d(a_{1,1})\leq \beta x+g_k-x+d(v_k)=x+g_k-x+6k\leq r_t
$).

\item $a_{i,1}$ will lead to $a_{i,2}$: The path that goes through $a_{i,2}$ and then $v_k$ (not losing generality, suppose $v_k$ is true) has length value $x+g_k-x+d(v_k)=x+g_k-x+6k$, and the path that goes through $b_i$ has length value $h_k+6(k-1)$. So we derive this from (\ref{eqn5}) and (\ref{eqn6}).

\item $a_{i,j}, j<l$ will lead to $a_{i,j+1}$, because (\ref{eqn5}) and (\ref{eqn9}) (The path that uses $a_{i,j}$ to $v_k$ has length value $g_k\beta^{-1}+6k$).

\item $a_{i,l},i<m$ will lead to $a_{i+1,1}$, because the path from $a_{i+1,1}\rightarrow a_{i+1,2}\rightarrow v_k\rightarrow t$ has length value $\beta x + g_k -x +6k$. So we derive this from (\ref{eqn7}) and (\ref{eqn9}).

\item $a_{m,l}$ will lead to $c_1$, because (\ref{eqn8}).

\item $c_i$ will lead to $c_{i+1}$ for $1\leq i\leq n-1$, because (\ref{eqn8}).

\item $c_n$ will lead to $u_n^*$ (here $u_n^*=u_n$ if variable $n$ is true, otherwise $u_n^*=u_n'$) because (\ref{eqn8}).

\item The agent then travels from $u_n^*$ to $t$ , which is trivial by computation on the weights.
\end{enumerate} 
\end{document}